\documentclass[conference,letterpaper]{IEEEtran}
\IEEEoverridecommandlockouts
\usepackage[utf8]{inputenc} 
\usepackage[T1]{fontenc}
\usepackage{url}
\usepackage{ifthen}
\usepackage{cite}
\usepackage[cmex10]{amsmath} 

\usepackage{amsfonts}
\usepackage{amssymb}
\usepackage{amsthm}
\usepackage{enumitem}
\usepackage{xcolor}
\usepackage{float}
\usepackage{bbm}
\usepackage{graphicx}
\usepackage{subcaption}

\usepackage{tikz}
\usepackage{tikz-3dplot}

\usepackage{pgfplots}
\pgfplotsset{compat=1.17}

\newtheorem{theorem}{Theorem}

\newtheorem{definition}{Definition}
\newtheorem{corollary}{Corollary}

\newtheorem{claim}{Claim}

\theoremstyle{definition}
\newtheorem{remark}{Remark}

\theoremstyle{definition} \newtheorem{example}{Example}

\def \extended {1}

\begin{document}
\title{On Han- and Shearer-Type Information Inequalities: Equality Conditions and Combinatorial Applications} 


\author{%
  \IEEEauthorblockN{Gunank Jakhar, Gowtham R. Kurri, Suryajith Chillara}
  \IEEEauthorblockA{International Institute of Information Technology, Hyderabad \\
                    Hyderabad, India.\\ 
                    Email: \{gunank.jakhar@research., gowtham.kurri@, suryajith.chillara@\}iiit.ac.in}
  \and
  \IEEEauthorblockN{Vinod M. Prabhakaran}
  \IEEEauthorblockA{Tata Institute of Fundamental Research\\
                    Mumbai, India.\\
                    Email: vinodmp@tifr.res.in}
}

\maketitle


\begin{abstract}
\if \extended 0%
  THIS PAPER IS ELIGIBLE FOR THE STUDENT PAPER AWARD.
\fi
  Madiman and Tetali (2010) established strong and weak fractional inequalities that subsume classical entropy subset inequalities such as Han’s inequality and Shearer’s lemma. In this work, we derive a decomposition of the gap in the strong fractional inequality and use it to establish the necessary and sufficient conditions for equality in the strong form of Han's inequality. As a consequence, we characterize the equality conditions for a strong Loomis-Whitney-type projection inequality of Balister and Bollob\'as (2012). Finally, we apply the weak fractional inequality to an extremal graph theory problem, recovering and extending results of Sason (2022) and Boucheron~\emph{et al.} (2013), whose approaches rely on Han's inequality.
\end{abstract}
\if \extended 0%
\textit{Complete proofs are provided in the extended version~\cite{JakharKCP26}}.
\fi

\section{Introduction}
Information inequalities have applications across diverse fields, including combinatorics~\cite{Pippenger99}, graph theory~\cite{Radhakrishnan03,Sason22}, communication complexity~\cite{BarYossefJKS02}, private information retrieval~\cite{sun2017capacity,yao2021capacity}. Classical examples include subadditivity of entropy, Han's inequality~\cite{Han78}, Shearer's lemma~\cite{ChungGFS86,Radhakrishnan03}. Han's inequality states that the joint entropy is upper bounded by a suitably weighted sum of the entropies of all $k$-sized subsets of $n$ random variables, for any $k\in[1:n]$. Shearer's lemma generalizes Han's inequality to arbitrary families of subsets.




Using Han's inequality, Radhakrishnan~\cite{Radhakrishnan03} gave an entropy proof of the Loomis--Whitney projection inequality~\cite{LoomisH49}, which bounds the cardinality of a finite set $S$ in $\mathbb{R}^d$ in terms of the cardinalities of its $(d-1)$-dimensional coordinate projections. It is well known that equality holds if and only if $S$ is a box, i.e., a Cartesian product of finite non-empty subsets of $\mathbb{R}$~\cite{EllisKFY16,CampiGS19}. This inequality has applications in databases~\cite{NgoPRR18} and matrix multiplication~\cite{Irony04}. Sason~\cite{Sason22} also applied Han's inequality to an extremal graph theory problem, generalizing an edge-isoperimetric inequality on the binary hypercube previously considered in \cite{BoucheronLM13}. 

Madiman and Tetali~\cite{MadimanT10} generalize Shearer's lemma by assigning arbitrary nonnegative weights to the members of the family, subject to fractional covering constraints. They also obtain a stronger upper bound by introducing suitable conditioning, generalizing an analogous strengthening of Shearer's lemma due to Kahn~\cite{Kahn01_partial_order}. We refer to these bounds as the weak and strong fractional inequalities, respectively.

Equality conditions and stability for the weak fractional inequality were studied in \cite{JakharKCP25}, with applications to multivariate mutual information and matrix inequalities. The strong fractional inequality has been used to derive upper bounds on the number of graph homomorphisms and the number of independent sets in an arbitrary graph~\cite[Section~V]{MadimanT10}. Kahn's earlier special case~\cite{Kahn01_partial_order} was used to show that a uniformly chosen cube-indexed random walk takes only a constant number of distinct values with exponentially high probability~\cite{Kahn01_partial_order}, resolving a conjecture of Benjamini~\emph{et~al.}~\cite{benjamini2000random}. 

In this work, we investigate the equality conditions and stability for the strong fractional inequality, and explore its applications to projection inequalities. We also apply the weak fractional inequality to an extremal graph theory problem. Our main contributions are as follows:
\begin{itemize}[leftmargin=*]
    \item We show that the gap in the strong form of Han's inequality admits a decomposition into a weighted sum of the conditional gaps in the weak form (Theorem~\ref{Gap_Decomposition}). Using this decomposition, we establish a stability result (Corollary~\ref{Strong_Han_Gap_Stability}) and characterize the necessary and sufficient conditions for equality in the strong form of Han's inequality (Corollary~\ref{Strong_Han_Gap_Equality}).
    \item We characterize equality conditions for a strong Loomis--Whitney-type projection inequality of Balister and Boll{o}b\'as~\cite[Theorem~2.2]{BalisterB12}. Specifically, equality holds if and only if every nonempty fiber over the first coordinate is a box, with corresponding sides having the same cardinalities across fibers (Theorem~\ref{StrongLW-ddim_Equality}). We also characterize the equality conditions for an analogous inequality for compact bodies (Theorem~\ref{StrongVolLW_Equality}).  
    \item Finally, we study an extremal graph theory problem motivated by confusability graphs of noisy channels, extending the formulation considered in \cite{Sason22}. Whereas Sason\cite{Sason22} uses Han's inequality for families consisting of all sets of a fixed size, we apply the weak fractional inequality to arbitrary set families. We obtain upper bounds on the number of edges (Theorem~\ref{ExtremalGraph}) that recover and extend the results of ~\cite[Theorem~4.2]{BoucheronLM13} and \cite[Theorem~2]{Sason22}. We further use these bounds to derive an edge-isoperimetric inequality concerning the influences of subsets of the binary hypercube (Theorem~\ref{InfluenceLowerBound}).
\end{itemize}

\section{Preliminaries}
\textit{Notation}: We use $[i : i+k]$ to represent the set $\{i, i+1, \ldots, i+k-1, i+k\}$, where $i,k\in\mathbb{N}$. For any set $A \subseteq [1:n]$, $< A$ denotes the set of all indices in $[1:n]$ that are less than every index in $A$. $A^{\text{c}}$ denotes the complement of a set $A$. $A \triangle B$ denotes the symmetric difference of sets $A$ and $B$, i.e., $A \triangle B = (A \setminus{B}) \cup (B \setminus{A})$. All the logarithmic quantities are with respect to base $2$.

\begin{definition}[Fractional Partition and Fractional Covering]
    \hspace{0.5em}
    \begin{enumerate}[leftmargin=*]
        \item Given a family $\mathcal{F}$ of subsets of $[1:n]$, a function $\gamma:~\mathcal{F}\rightarrow[0,\infty)$ is called a fractional partition if, for all $i \in [1:n]$, $\sum\limits_{S \in \mathcal{F}: i \in S} \gamma(S) = 1$.
        \item Given a family $\mathcal{F}$ of subsets of $[1:n]$, a function $\alpha:~\mathcal{F}\rightarrow[0,\infty)$ is called a fractional covering if, for all $i \in [1:n]$, $\sum\limits_{S \in \mathcal{F}: i \in S} \alpha(S) \geq 1$.        
    \end{enumerate}
\end{definition}

Let $X_1,\ldots,X_n$ be a collection of discrete random variables. Two fundamental properties of entropy, namely the chain rule and the fact that conditioning cannot increase entropy, underlie several classical inequalities relating the joint entropy $H(X_{[1:n]})$ to the subset entropies $H(X_S)$, $S\subseteq [1:n]$. Among these, Han's inequality~\cite{Han78} states that the joint entropy is bounded above by an appropriately normalized sum of the entropies of all $k$-element subsets of the random variables, i.e.,
\begin{align}\label{eq: Han}
    H(X_{[1:n]}) \leq \frac{1}{\binom{n-1}{k-1}} \sum_{S \subseteq [1:n]: |S| = k} H(X_S).
\end{align}
Shearer's lemma~\cite{ChungGFS86} extends this upper bound to arbitrary families of subsets of $[1:n]$. Specifically, let $\mathcal{F}$ be a family of subsets of $[1:n]$, and suppose that each index $i \in [1:n]$ belongs to at least $r$ members of $\mathcal{F}$. Then
\begin{align}\label{eq: Shearer}
    H(X_{[1:n]}) \leq \frac{1}{r} \sum_{S \in \mathcal{F}} H(X_S).
\end{align}
Madiman and Tetali~\cite{MadimanT10} generalize \eqref{eq: Shearer} by assigning arbitrary nonnegative weights to the members of $\mathcal{F}$, subject to fractional covering constraints. They also obtain a stronger upper bound by introducing suitable conditioning, as stated below. 
\begin{theorem}[\!\!{\cite{MadimanT10}}]\label{FractionalSL}
     Let $\mathcal{F}$ be a family of subsets of $[1:n]$, and let $\alpha:\mathcal{F}\rightarrow [0,\infty)$ be a fractional covering. For jointly distributed random variables $X_1, \ldots, X_n$, the following hold.
    \begin{enumerate}
        \item {[{Weak form}]} \begin{align}\label{eq: MT_UB}
            H(X_{[1:n]}) \leq \sum_{S \in \mathcal{F}} \alpha(S)H(X_S).
        \end{align}    
        \item {[{Strong form}]} \begin{align}\label{eq: MT_UB_Strong}
            H(X_{[1:n]}) \leq \sum_{S \in \mathcal{F}} \alpha(S)H(X_S \mid X_{<S}).
        \end{align}    
    \end{enumerate}
\end{theorem}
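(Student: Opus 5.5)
The plan is to prove the strong form \eqref{eq: MT_UB_Strong} first and then deduce the weak form \eqref{eq: MT_UB} from it. The deduction is immediate, since conditioning does not increase entropy: $H(X_S\mid X_{<S})\le H(X_S)$ for every $S\in\mathcal{F}$, and the weights $\alpha(S)$ are nonnegative. All of the work therefore lies in the strong form.

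For the strong form, fix $S\in\mathcal{F}$ and write its elements in increasing order as $S=\{s_1<s_2<\dots<s_m\}$. Apply the chain rule:
\begin{align*}
H(X_S\mid X_{<S})=\sum_{j=1}^{m} H\bigl(X_{s_j}\mid X_{<S},X_{s_1},\dots,X_{s_{j-1}}\bigr).
\end{align*}
The conditioning set $<S\cup\{s_1,\dots,s_{j-1}\}$ is a subset of $[1:s_j-1]$, because every index in $<S$ is below $s_1\le s_j$. Conditioning on more variables cannot increase entropy, so each term is at least $H(X_{s_j}\mid X_{[1:s_j-1]})$. This gives
\begin{align*}
H(X_S\mid X_{<S})\ge \sum_{i\in S} H\bigl(X_i\mid X_{[1:i-1]}\bigr).
\end{align*}

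Next, multiply by $\alpha(S)\ge 0$, sum over $S\in\mathcal{F}$, and exchange the order of summation:
\begin{align*}
\sum_{S\in\mathcal{F}}\alpha(S)H(X_S\mid X_{<S})\ge\sum_{i=1}^{n}\Bigl(\sum_{S\ni i}\alpha(S)\Bigr)H\bigl(X_i\mid X_{[1:i-1]}\bigr).
\end{align*}
The inner coefficient is at least $1$ because $\alpha$ is a fractional covering. Each conditional entropy is nonnegative, since the variables are discrete. Hence the right-hand side is at least $\sum_{i=1}^n H(X_i\mid X_{[1:i-1]})$, which equals $H(X_{[1:n]})$ by the chain rule.

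The only delicate step is the comparison after the chain rule. It needs the observation that the conditioning set is contained in $[1:s_j-1]$, and this uses precisely that $X_{<S}$ conditions only on indices smaller than $\min S$. The rest is bookkeeping. Nonnegativity of discrete entropy is what lets us discard the excess weight $\sum_{S\ni i}\alpha(S)-1\ge 0$.
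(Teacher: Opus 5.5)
Your proof is correct and is the standard argument. The paper does not reprove this theorem; it cites Madiman and Tetali. Still, your chain-rule comparison $H(X_S\mid X_{<S})=\sum_{i\in S}H(X_i\mid X_{([1:i-1]\cap S)\cup(<S)})\ge\sum_{i\in S}H(X_i\mid X_{[1:i-1]})$, followed by exchanging the sums, is exactly what drives the paper's proof of Part 1 of Theorem~\ref{Gap_Decomposition}, where for a fractional partition the slack appears as conditional mutual informations. Your use of nonnegativity of discrete entropy to absorb the excess covering weight is also why the paper states the differential-entropy version only for fractional partitions.
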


In \cite[Section V-A]{JakharKCP25}, the gap in inequality \eqref{eq: MT_UB} obtained by specializing the fractional covering $\alpha$ to a fractional partition $\gamma: \mathcal{F} \rightarrow (0,\infty)$ was introduced as a multivariate mutual information measure called the $(\mathcal{F},\gamma)$-mutual information, and denoted by $(\mathcal{F},\gamma)\text{-MI}(X_1; \ldots; X_n)$. As shown in \cite[Proposition~1]{JakharKCP25}, this measure generalizes total correlation~\cite{Watanabe60}, dual total correlation~\cite{Han75}, and shared information~\cite{CsiszarN04}. Analogously, we define its conditional version as follows.  
\begin{definition}
    Given a family $\mathcal{F}$ of subsets of $[1:n]$, a fractional partition $\gamma: \mathcal{F} \rightarrow [0,\infty)$, and jointly distributed random variables $X_1, X_2, \ldots, X_n, Z_1, Z_2, \ldots, Z_m$, the conditional $(\mathcal{F},\gamma)$-mutual information of $X_1, X_2, \ldots, X_n$ given $Z_1, Z_2, \ldots, Z_m$
    \begin{align}
        &(\mathcal{F},\gamma)\emph{-MI}(X_1; \ldots; X_n \mid Z_{[1:m]}) \nonumber \\
        &\quad \quad := \sum_{S \in \mathcal{F}} \gamma(S)H(X_S \mid Z_{[1:m]}) - H(X_{[1:n]} \mid Z_{[1:m]}).
    \end{align}
\end{definition}

\begin{theorem}\label{Conditional_FGamma_MI}
    Let $\mathcal{F}$ be a family of proper subsets of $[1:n]$ such that no two distinct indices $i,j\in[1:n]$ always appear together in the members of $\mathcal{F}$. Let $\gamma:\mathcal{F}\rightarrow (0,\infty)$ be a fractional partition. Then for jointly distributed random variables $X_1, X_2, \ldots, X_n, Z_1, Z_2, \ldots, Z_m$, we have $(\mathcal{F},\gamma)\emph{-MI}(X_1; \ldots; X_n \mid Z_{[1:m]}) \geq 0$, with equality if and only if $X_1, X_2, \ldots, X_n$ are conditionally independent given $Z_{[1:m]}$.
\end{theorem}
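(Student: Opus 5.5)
Nonnegativity is immediate from the conditional version of Theorem~\ref{FractionalSL}. Since a fractional partition is in particular a fractional covering, the weak form \eqref{eq: MT_UB} holds for every conditional distribution of $X_{[1:n]}$ given $Z_{[1:m]}=z$. Averaging over $z$ gives $\sum_{S}\gamma(S)H(X_S\mid Z_{[1:m]})\ge H(X_{[1:n]}\mid Z_{[1:m]})$.

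For the equality case we would not reprove anything from scratch. The plan is to reduce to the unconditional result \cite[Theorem~1]{JakharKCP25}, or re-derive it, in a pointwise form. Write
\begin{align*}
(\mathcal{F},\gamma)\text{-MI}(X_1;\ldots;X_n\mid Z_{[1:m]})=\sum_{z}P_Z(z)\,(\mathcal{F},\gamma)\text{-MI}(X_1;\ldots;X_n\mid Z_{[1:m]}=z).
\end{align*}
Each summand is the unconditional measure applied to $P_{X\mid Z=z}$, and each is nonnegative by the first step. Hence the total is zero if and only if every summand with $P_Z(z)>0$ vanishes. Conditional independence given $Z$ is likewise the statement that $P_{X\mid Z=z}$ is a product distribution for each such $z$. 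The theorem therefore reduces to the unconditional claim: under the stated hypotheses on $\mathcal{F}$, the $(\mathcal{F},\gamma)$-MI is zero if and only if $X_1,\ldots,X_n$ are independent.

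\textbf{Sufficiency.} If the $X_i$ are independent, then $H(X_S)=\sum_{i\in S}H(X_i)$. The fractional partition property gives $\sum_S\gamma(S)\sum_{i\in S}H(X_i)=\sum_i H(X_i)=H(X_{[1:n]})$, so the gap is zero.

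\textbf{Necessity.} This is the main obstacle. We would expand the gap via the chain rule for each $S\in\mathcal{F}$ in the natural increasing order. For $i\in S$, conditioning on $X_{S\cap[1:i-1]}$ instead of $X_{[1:i-1]}$ contributes $I(X_i;X_{[1:i-1]\setminus S}\mid X_{S\cap[1:i-1]})$. The gap then equals
\begin{align*}
\sum_{S}\gamma(S)\sum_{i\in S}I\bigl(X_i;X_{[1:i-1]\setminus S}\mid X_{S\cap[1:i-1]}\bigr),
\end{align*}
using $\sum_{S\ni i}\gamma(S)=1$. Since $\gamma>0$, zero gap forces every term to vanish.

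We would then show by induction on $i$ that $X_i$ is independent of $X_{[1:i-1]}$. The hypothesis that no two indices always appear together gives, for each $j<i$, some $S\ni i$ with $j\notin S$. Properness of the sets ensures each index has some set missing it, which handles $n\ge2$ trivially.

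The difficulty is that each vanishing term only gives independence of $X_i$ from the out-of-$S$ variables conditional on the in-$S$ earlier variables. These statements must be combined, presumably using the induction hypothesis that $X_{[1:i-1]}$ are already mutually independent, together with a semigraphoid-type argument. We would try the ordering trick of applying the chain rule with $i$ placed first among different orderings, or use symmetry by re-running the decomposition under all permutations of indices. This yields $I(X_i;X_{S^c}\mid\ldots)$ with various conditioning sets, and we would aim to conclude pairwise and then mutual independence. If this combination proves delicate, we would instead cite the unconditional equality characterization of \cite{JakharKCP25} directly and apply it to each $P_{X\mid Z=z}$.
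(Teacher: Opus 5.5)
Your proposal is correct and is essentially the paper's proof. The paper gets nonnegativity and the equality case by applying the weak form and the unconditional equality characterization of \cite{JakharKCP25} (Proposition~1 there) to each conditional law $P_{X_{[1:n]}\mid Z_{[1:m]}=z}$ with positive probability, exactly as your reduction and fallback do.

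Your unfinished direct necessity argument is not needed, but it would close without any permutation trick. Under the induction hypothesis the support of $X_{[1:i-1]}$ is a product set. On a product set, the function $P_{X_i\mid X_{[1:i-1]}}(x_i\mid\cdot)$, which depends only on $x_{S\cap[1:i-1]}$ for every $S\ni i$, depends only on the coordinates in $\bigcap_{S\ni i}(S\cap[1:i-1])=\emptyset$, so it is constant. The emptiness of this intersection uses $\gamma>0$ and the partition identity to give, for each $j\neq i$, some $S\ni i$ with $j\notin S$.
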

\begin{proof}
    The result follows by applying \eqref{eq: MT_UB} and the equality characterization in \cite[Proposition~1]{JakharKCP25} to the conditional distribution of $X_{[1:n]}$ given $Z_{[1:m]}=z_{[1:m]}$, for every $z_{[1:m]}$ with positive probability.
\end{proof}

When the fractional covering $\alpha$ in \eqref{eq: MT_UB_Strong} is specialized to a fractional partition $\gamma:\mathcal{F}\rightarrow(0,\infty)$, we denote the resulting gap by
\begin{align}\label{eq: Strong_Gap}
    \operatorname{Gap}(\mathcal{F},\gamma)
    &:=
    \sum_{S\in\mathcal{F}}
    \gamma(S)H(X_S\mid X_{<S})
    -H(X_{[1:n]}).
\end{align}
Note that, by \eqref{eq: MT_UB_Strong}, $\operatorname{Gap}(\mathcal{F},\gamma)\geq 0$.

Han's inequality and both the weak and strong forms in Theorem~\ref{FractionalSL} with $\alpha$ restricted to a fractional partition $\gamma$ remain valid for jointly distributed continuous random variables after replacing $H(\cdot)$ by $h(\cdot)$~\cite{MadimanT10}. The same is true for Theorem~\ref{Conditional_FGamma_MI}~\cite{JakharKCP25}.

\section{Stability Analysis and Equality Conditions for the Strong Form}
We first derive a decomposition of the gap defined in \eqref{eq: Strong_Gap} and use it to establish stability results and necessary and sufficient conditions for equality in the strong form. All the results in this section hold in both the discrete and continuous settings because the proofs use only the chain rule and the non-negativity of conditional mutual information. Let $\mathcal{C}_k$ denote the complete $k$-uniform family on $[1:n]$, i.e., $\mathcal{C}_k=\{S\subseteq [1:n]: |S|=k\}$.
\begin{theorem}\label{Gap_Decomposition}
   Let $\mathcal{F}$ be a family of subsets of $[1:n]$, and let $\gamma:\mathcal{F}\rightarrow [0,\infty)$ be a fractional partition. Then, for jointly distributed random variables $X_1, \ldots, X_n$, the following hold.
    \begin{enumerate}[leftmargin=*]
        \item $\emph{Gap}(\mathcal{F},\gamma)$ decomposes as follows.
        \begin{align}
            &\hspace{-1.3em}\emph{Gap}(\mathcal{F},\gamma) \nonumber \\ 
            &\hspace{-1.5em}= \sum_{S\in\mathcal{F}}\gamma(S) \sum_{i\in S} I\left( X_i;X_{[1:i-1] \cap S^{\mathrm{c}} \setminus{(<S)}} | X_{([1:i-1] \cap S) \cup (<S)} \right).
        \end{align}
        \item If $\mathcal{F}=\mathcal{C}_k$, $k\in[2:n-1]$\footnote{The gap $\operatorname{Gap}(\mathcal{C}_k,\gamma)$ is equal to zero for $k=1$ by the chain rule and for $k=n$ directly from its definition.} and $\gamma(S)=\frac{1}{\binom{n-1}{k-1}}$ for all $S \in \mathcal{C}_k$, then
        \begin{align}
            &\emph{Gap}(\mathcal{C}_k,\gamma) \nonumber \\
            &= \sum_{j = 1}^{n-k} \frac{\binom{n-j-1}{k-2}}{\binom{n-1}{k-1}}(\mathcal{F}_j,\gamma_j)\emph{-MI}(X_{j+1}; \ldots; X_n \mid X_{[1:j]}),
        \end{align}
        where, for each $j \in [1:n-k]$, $\mathcal{F}_j:=\{A\subseteq[j+1:n]:|A|=k-1\}$ and $\gamma_j(A):=\frac{1}{\binom{n-j-1}{k-2}}$ for all $A\in\mathcal{F}_j$.
    \end{enumerate}
\end{theorem}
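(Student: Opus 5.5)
Part 1 rests on the chain rule alone. Write $H(X_{[1:n]})=\sum_{i=1}^n H(X_i\mid X_{[1:i-1]})$. Since $\gamma$ is a fractional partition, $\sum_{S\ni i}\gamma(S)=1$ for each $i$, so
\[
H(X_{[1:n]})=\sum_{S\in\mathcal{F}}\gamma(S)\sum_{i\in S}H(X_i\mid X_{[1:i-1]}).
\]
For each $S$, I would expand $H(X_S\mid X_{<S})$ by the chain rule along the elements of $S$ in increasing order. This gives
\[
H(X_S\mid X_{<S})=\sum_{i\in S}H(X_i\mid X_{([1:i-1]\cap S)\cup(<S)}).
\]
Subtracting termwise, each summand becomes $H(X_i\mid X_B)-H(X_i\mid X_{[1:i-1]})$ with $B=([1:i-1]\cap S)\cup(<S)$. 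Every element of $<S$ is less than $i\in S$, so $B\subseteq[1:i-1]$. Hence $[1:i-1]\setminus B=[1:i-1]\cap S^{\mathrm c}\setminus(<S)$, and the difference is exactly the stated conditional mutual information. This proves part 1.

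For part 2, I would not use part 1 directly but reorganize by the minimum element. Each $S\in\mathcal{C}_k$ has a minimum element $j$, so $S=\{j\}\cup A$ with $A\in\mathcal{F}_j$, $j\in[1:n-k+1]$, and $<S=[1:j-1]$. Then
\[
H(X_S\mid X_{<S})=H(X_j\mid X_{[1:j-1]})+H(X_A\mid X_{[1:j]}).
\]
For fixed $j$ there are $\binom{n-j}{k-1}$ such sets. Summing with weight $1/\binom{n-1}{k-1}$ gives
\[
\sum_S\gamma(S)H(X_S\mid X_{<S})=\sum_{j}\frac{\binom{n-j}{k-1}}{\binom{n-1}{k-1}}H(X_j\mid X_{[1:j-1]})+\sum_j\frac{\binom{n-j-1}{k-2}}{\binom{n-1}{k-1}}\sum_{A\in\mathcal{F}_j}\gamma_j(A)H(X_A\mid X_{[1:j]}).
\]
By the definition of the conditional $(\mathcal{F}_j,\gamma_j)$-MI, which is legitimate because $\gamma_j$ is a fractional partition on $[j+1:n]$ (each index lies in $\binom{n-j-1}{k-2}$ sets), the inner sum equals
\[
(\mathcal{F}_j,\gamma_j)\text{-MI}(X_{j+1};\ldots;X_n\mid X_{[1:j]})+H(X_{[j+1:n]}\mid X_{[1:j]}).
\]
The term $j=n-k+1$ has $A=[n-k+2:n]$ forced. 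Its MI is zero, so it contributes only the entropy term and the MI sum runs to $n-k$.

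It remains to show that the entropy terms sum to $H(X_{[1:n]})$. Write $H(X_{[j+1:n]}\mid X_{[1:j]})=\sum_{i>j}H(X_i\mid X_{[1:i-1]})$ and collect the coefficient of $H(X_i\mid X_{[1:i-1]})$. Multiplying through by $\binom{n-1}{k-1}$, this coefficient is
\[
\binom{n-i}{k-1}+\sum_{j<i}\binom{n-j-1}{k-2},
\]
with $j$ restricted to $j\le n-k+1$ and binomials vanishing outside their range. By the hockey-stick identity, $\sum_{j=1}^{i-1}\binom{n-j-1}{k-2}=\binom{n-1}{k-1}-\binom{n-i}{k-1}$, so the coefficient equals $\binom{n-1}{k-1}$. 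Therefore the entropy terms total exactly $\sum_i H(X_i\mid X_{[1:i-1]})=H(X_{[1:n]})$. Subtracting $H(X_{[1:n]})$ then yields the claimed formula.

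The main obstacle is this coefficient bookkeeping, in particular the boundary indices $j=n-k+1$ and $i$ near $n$. I would verify it carefully with the convention $\binom{m}{r}=0$ for $m<r$.
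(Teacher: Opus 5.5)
Your proposal is correct and follows the paper's proof essentially step for step. Part 1 is the same chain-rule-plus-fractional-partition argument, and Part 2 uses the same minimum-element decomposition $S=\{j\}\cup A$ with $H(X_{[j+1:n]}\mid X_{[1:j]})$ added and subtracted to form the conditional MI terms; the only difference is bookkeeping (the paper splits off the $j=n-k+1$ term and telescopes via Pascal's rule, whereas you keep it, note its MI vanishes, and match coefficients of $H(X_i\mid X_{[1:i-1]})$ via the hockey-stick identity, which checks out).
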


Part 1 of Theorem~\ref{Gap_Decomposition} follows from the chain rule of entropy and the defining property of a fractional partition. Part 2 exploits the combinatorial symmetry of the complete $k$-uniform family $\mathcal{C}_k$ to express $\operatorname{Gap}(\mathcal{C}_k,\gamma)$ as a weighted sum of the conditional weak-form gaps, namely conditional $(F,\gamma)$-mutual information quantities. 
\if \extended 1%
A detailed proof of Theorem~\ref{Gap_Decomposition} is provided in Appendix~\ref{appendix:a}.
\fi

\begin{corollary}\label{Strong_Han_Gap_Stability}
    Let $k\in[2:n-1]$, and let $\gamma:\mathcal{C}_k\rightarrow (0,\infty)$ be the fractional partition defined by $\gamma(S) = \frac{1}{\binom{n-1}{k-1}}$ for all $S \in \mathcal{C}_k$. For jointly distributed random variables $X_1, \ldots, X_n$ and any $\varepsilon \geq 0$, if $\operatorname{Gap}(\mathcal{C}_k,\gamma) \leq \varepsilon$, then
    \begin{align}
        (\mathcal{F},\gamma')\emph{-MI}(X_2; \ldots; X_n \mid X_1) \leq \frac{n-1}{k-1}\varepsilon,
    \end{align}
    where $\mathcal{F}=\{A\subseteq [2:n]: |A|=k-1\}$ and $\gamma'(A) = \frac{1}{\binom{n-2}{k-2}},$ for all $A \in \mathcal{F}$.
\end{corollary}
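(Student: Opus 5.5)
The plan is to read the bound off part 2 of Theorem~\ref{Gap_Decomposition}. That part writes $\operatorname{Gap}(\mathcal{C}_k,\gamma)$ as a nonnegatively weighted sum of conditional $(\mathcal{F}_j,\gamma_j)$-mutual informations, and the $j=1$ term is exactly the quantity appearing in the corollary. Indeed, for $j=1$ we have $\mathcal{F}_1=\{A\subseteq[2:n]:|A|=k-1\}=\mathcal{F}$ and $\gamma_1(A)=1/\binom{n-2}{k-2}=\gamma'(A)$. The corresponding summand is
\begin{align}
\frac{\binom{n-2}{k-2}}{\binom{n-1}{k-1}}\,(\mathcal{F},\gamma')\text{-MI}(X_2;\ldots;X_n\mid X_1).
\end{align}
Since $k\le n-1$, we have $n-k\ge 1$, so the index $j=1$ always occurs in the sum.

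The key step is to check that every other summand, $j\in[2:n-k]$, is nonnegative. Each coefficient $\binom{n-j-1}{k-2}/\binom{n-1}{k-1}$ is clearly nonnegative. For the mutual information terms, note that $\gamma_j$ is a fractional partition on $\mathcal{F}_j$: each index of $[j+1:n]$ lies in exactly $\binom{n-j-1}{k-2}$ sets of size $k-1$. Hence $(\mathcal{F}_j,\gamma_j)\text{-MI}(X_{j+1};\ldots;X_n\mid X_{[1:j]})\ge 0$. This follows from the weak form \eqref{eq: MT_UB} applied to the conditional distribution given each value $X_{[1:j]}=x_{[1:j]}$ and then averaged over $x_{[1:j]}$. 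One could also cite Theorem~\ref{Conditional_FGamma_MI}, but it requires the extra hypotheses that the sets be proper and that no two indices always appear together. Using only \eqref{eq: MT_UB} avoids those hypotheses, and in the continuous case the same argument goes through with $h$ in place of $H$.

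Dropping all terms with $j\ge2$ therefore gives
\begin{align}
\varepsilon\ge\operatorname{Gap}(\mathcal{C}_k,\gamma)\ge \frac{\binom{n-2}{k-2}}{\binom{n-1}{k-1}}\,(\mathcal{F},\gamma')\text{-MI}(X_2;\ldots;X_n\mid X_1).
\end{align}
It remains to simplify the constant: $\binom{n-1}{k-1}=\frac{n-1}{k-1}\binom{n-2}{k-2}$, so the coefficient equals $\frac{k-1}{n-1}$. This coefficient is strictly positive because $k\ge 2$, so dividing by it yields the claimed bound $\frac{n-1}{k-1}\varepsilon$.

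I expect no step to be a real obstacle, since the work is already contained in Theorem~\ref{Gap_Decomposition}. The only point needing care is the nonnegativity of the discarded terms, which must be justified for general $k$, including degenerate cases. For example, when $k=2$ the family $\mathcal{F}_j$ consists of singletons, and then the conditional MI is zero by the chain rule rather than by any strict inequality; the weak-form argument above still covers it.
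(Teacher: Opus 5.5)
Your proof is correct and is essentially the paper's: invoke part 2 of Theorem~\ref{Gap_Decomposition}, discard the $j\ge 2$ summands by nonnegativity, and use $\binom{n-1}{k-1}/\binom{n-2}{k-2}=\frac{n-1}{k-1}$. Your conditional weak-form justification of nonnegativity is exactly how the paper proves Theorem~\ref{Conditional_FGamma_MI}, and that theorem's extra hypotheses in fact hold for every $\mathcal{F}_j$, since $|[j+1:n]|=n-j\ge k$.

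One minor slip is in your closing remark. For $k=2$, the conditional $(\mathcal{F}_j,\gamma_j)$-MI is the conditional total correlation $\sum_{i=j+1}^{n}H(X_i\mid X_{[1:j]})-H(X_{[j+1:n]}\mid X_{[1:j]})$, which is generally nonzero. It is the gap for $k=1$ that vanishes by the chain rule. Since your argument only uses nonnegativity, it is unaffected.
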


Corollary~\ref{Strong_Han_Gap_Stability} shows that near equality in the strong form of Han's inequality forces the conditional dependence among $X_2,\dots,X_n$ given $X_1$, as measured by their conditional $(\mathcal{F},\gamma')$-mutual information, to be small.
\if \extended 1%
A detailed proof of Corollary~\ref{Strong_Han_Gap_Stability} is provided in Appendix~\ref{appendix:b}.
\fi

\begin{corollary}\label{Strong_Han_Gap_Equality}
    Let $k\in[2:n-1]$, and let $\gamma:\mathcal{C}_k\rightarrow (0,\infty)$ be the fractional partition defined by $\gamma(S) = \frac{1}{\binom{n-1}{k-1}}$ for all $S \in \mathcal{C}_k$. For jointly distributed random variables $X_1 \ldots, X_n$, $\operatorname{Gap}(\mathcal{C}_k,\gamma) = 0$ if and only if $X_2, \ldots, X_n$ are conditionally independent given $X_1$.
\end{corollary}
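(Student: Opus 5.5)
The plan is to read off both directions from Part~2 of Theorem~\ref{Gap_Decomposition} together with Theorem~\ref{Conditional_FGamma_MI}. By Part~2,
\begin{align*}
\operatorname{Gap}(\mathcal{C}_k,\gamma)=\sum_{j=1}^{n-k}\frac{\binom{n-j-1}{k-2}}{\binom{n-1}{k-1}}\,(\mathcal{F}_j,\gamma_j)\text{-MI}(X_{j+1};\ldots;X_n\mid X_{[1:j]}).
\end{align*}
Each coefficient is strictly positive, because $n-j-1\geq k-1\geq k-2\geq 0$ for $j\leq n-k$. Each summand is also nonnegative. To see this, note that $\mathcal{F}_j$ consists of the $(k-1)$-subsets of $[j+1:n]$, a ground set of size $n-j\geq k$. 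Hence every member is a proper subset, and since $k-1\geq 1$, no two distinct indices always appear together. The weights $\gamma_j$ form a fractional partition, because each index lies in exactly $\binom{n-j-1}{k-2}$ members. So Theorem~\ref{Conditional_FGamma_MI} applies to every term.

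\textbf{Forward direction.} Suppose $\operatorname{Gap}(\mathcal{C}_k,\gamma)=0$. Then every term vanishes, in particular the $j=1$ term. The $j=1$ term is exactly the conditional $(\mathcal{F},\gamma')$-MI of Corollary~\ref{Strong_Han_Gap_Stability}, since $\mathcal{F}_1=\mathcal{F}$ and $\gamma_1=\gamma'$. Alternatively, invoke Corollary~\ref{Strong_Han_Gap_Stability} with $\varepsilon=0$. By the equality case of Theorem~\ref{Conditional_FGamma_MI}, $X_2,\ldots,X_n$ are conditionally independent given $X_1$.

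\textbf{Reverse direction.} This is the step needing a little care: we must show that conditional independence given $X_1$ forces every $j$-th term to vanish, not only the first. Assume $X_2,\ldots,X_n$ are conditionally independent given $X_1$. For any $j\geq 1$, condition on $X_{[1:j]}=x_{[1:j]}$. Since the joint law factorizes as $P_{X_1}\prod_{i\geq 2}P_{X_i\mid X_1}$, the conditional law of $X_{[j+1:n]}$ given $X_{[1:j]}$ is $\prod_{i>j}P_{X_i\mid X_1}$. Thus $X_{j+1},\ldots,X_n$ are conditionally independent given $X_{[1:j]}$. This is the standard weak-union/decomposition consequence, or directly $I(X_i;X_{[j+1:n]\setminus\{i\}}\mid X_{[1:j]})=0$ by the chain rule.

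By the equality case of Theorem~\ref{Conditional_FGamma_MI} (the easy direction, which can also be checked directly since conditional entropies then add), each $(\mathcal{F}_j,\gamma_j)$-MI term is zero. Hence $\operatorname{Gap}(\mathcal{C}_k,\gamma)=0$.

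For continuous variables the same argument goes through, with densities in place of pmfs and differential entropies, as remarked before the section.
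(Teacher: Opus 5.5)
Your proposal is correct and follows essentially the same route as the paper. For the forward direction you take the vanishing $j=1$ term (equivalently Corollary~\ref{Strong_Han_Gap_Stability} with $\varepsilon=0$) and apply the equality case of Theorem~\ref{Conditional_FGamma_MI}; for the reverse direction you show $X_{j+1},\ldots,X_n$ are conditionally independent given $X_{[1:j]}$ for every $j$ and feed this into the decomposition of Theorem~\ref{Gap_Decomposition}.

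The differences are minor. You get this conditional independence directly from the factorization $P_{X_1}\prod_{i\ge 2}P_{X_i\mid X_1}$. The paper instead uses an entropy argument: vanishing conditional total correlation, then conditioning reduces entropy, then subadditivity. You also explicitly check the hypotheses of Theorem~\ref{Conditional_FGamma_MI} for each $\mathcal{F}_j$, which the paper leaves implicit.
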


For every $k\in[2:n-1]$, equality in Han's inequality \eqref{eq: Han} holds if and only if $X_1, \ldots, X_n$ are independent~\cite{Han78}. Thus, the equality condition does not depend on $k$. Corollary~\ref{Strong_Han_Gap_Equality} establishes the corresponding conclusion for the strong form of Han's inequality: its equality condition also does not depend on $k$, namely that $X_2,\dots,X_n$ are conditionally independent given $X_1$. 
\if \extended 1%
A detailed proof of Corollary~\ref{Strong_Han_Gap_Equality} is provided in Appendix~\ref{appendix:c}.
\fi

\section{Applications to Strong Loomis--Whitney-Type Projection Inequalities}
\subsection{Finite Sets}
The Loomis-Whitney inequality~\cite{LoomisH49} bounds the cardinality of a finite set in $\mathbb{R}^d$ in terms of the cardinalities of its coordinate projections. Let $d\geq 2$ and $S\subseteq \mathbb{R}^d$ be a set of $n$ distinct points. For each $i\in[1:d]$, let $n_{-i}$ denote the cardinality of the coordinate projection obtained by deleting the $i^{\text{th}}$-coordinate. Then
\begin{align}\label{eqn:LW}
    n^{d-1} \leq \prod_{i = 1}^d n_{-i},
\end{align}
with equality if and only if $S$ is a box, i.e., a Cartesian product of finite non-empty subsets of $\mathbb{R}$~\cite{EllisKFY16,CampiGS19}. 

 The following theorem is a strengthening of the Loomis--Whitney inequality, established in \cite{BalisterB12}.

\begin{theorem}[\!{\cite[A Finite-Set Form of Theorem~2.2]{BalisterB12}}]\label{StrongLW-ddim}
    Let $d\geq 3$, and let $S\subseteq \mathbb{R}^d$ be a set of $n$ distinct points. For each $i\in[2:d]$, let $n_{-i}$ denote the cardinality of the coordinate projection of $S$ obtained by deleting the $i^{\text{th}}$-coordinate. For $x_1\in\mathbb{R}$, define the slice $S_{x_1}=\{(x_2,\dots,x_d):(x_1,x_2,\dots,x_d)\in S\}$, and let 
    \begin{align}
        n^{\emph{max}}_{-1}=\max_{x_1\in\mathbb{R}}|S_{x_1}|
    \end{align}
    denote the largest slice cardinality. Then
    \begin{align}\label{eq:StrongLW-ddim}
            n^{d-1}\leq \left(\prod_{i=2}^{d}n_{-i}\right)n^{\emph{max}}_{-1}.
    \end{align}
\end{theorem}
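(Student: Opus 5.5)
The plan is to apply the strong form of Han's inequality, i.e.\ \eqref{eq: MT_UB_Strong} with $\mathcal{F}=\mathcal{C}_{d-1}$ on $[1:d]$ and $\gamma(S)=\frac{1}{\binom{d-1}{d-2}}=\frac{1}{d-1}$, to the uniform distribution on $S$. Let $(X_1,\dots,X_d)$ be uniform on $S$, so $H(X_{[1:d]})=\log n$. The members of $\mathcal{C}_{d-1}$ are the sets $[1:d]\setminus\{i\}$ for $i\in[1:d]$. For $i\in[2:d]$ the set $S_{-i}:=[1:d]\setminus\{i\}$ contains $1$, so $<S_{-i}=\emptyset$. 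For $i=1$ the set is $[2:d]$, and $<[2:d]=\{1\}$. Inequality \eqref{eq: MT_UB_Strong} then reads
\begin{align}
(d-1)\log n \leq \sum_{i=2}^d H(X_{[1:d]\setminus\{i\}}) + H(X_{[2:d]}\mid X_1).
\end{align}

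Next I bound each term by a cardinality. For $i\in[2:d]$, the vector $X_{[1:d]\setminus\{i\}}$ takes values in the coordinate projection of $S$ obtained by deleting the $i$th coordinate. That projection has $n_{-i}$ elements, so $H(X_{[1:d]\setminus\{i\}})\le \log n_{-i}$. For the conditional term, given $X_1=x_1$ with positive probability, $X_{[2:d]}$ is supported on the slice $S_{x_1}$. Hence
\begin{align}
H(X_{[2:d]}\mid X_1=x_1)\le \log|S_{x_1}|\le \log n^{\max}_{-1},
\end{align}
and averaging over $x_1$ gives $H(X_{[2:d]}\mid X_1)\le\log n^{\max}_{-1}$.

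Combining the two steps gives $(d-1)\log n\le \sum_{i=2}^d\log n_{-i}+\log n^{\max}_{-1}$, and exponentiating yields \eqref{eq:StrongLW-ddim}.

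Only a little care is needed. One must check that the chosen $\gamma$ is a fractional partition: each index lies in exactly $d-1$ of the $d$ sets of size $d-1$. One must also identify the conditioning sets $<S$ correctly, since only the set $[2:d]$ gets conditioning, on $X_1$. The hypothesis $d\ge3$ ensures $k=d-1\ge2$, so the strong form is not trivially an equality. The only step that is more than bookkeeping is bounding the conditional entropy by the maximal slice rather than an average one, and that follows pointwise in $x_1$.
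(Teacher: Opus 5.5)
Your proposal is correct and follows essentially the same route as the paper's proof in Appendix~D. You take $(X_1,\dots,X_d)$ uniform on $S$ and apply the strong form \eqref{eq: MT_UB_Strong} with $\mathcal{F}=\{[1:d]\setminus\{i\}: i\in[1:d]\}$ and $\gamma\equiv\frac{1}{d-1}$; you then bound $H(X_{[1:d]\setminus\{i\}})\le\log n_{-i}$, and bound $H(X_{[2:d]}\mid X_1)$ pointwise in $x_1$ by $\log n^{\max}_{-1}$.
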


The inequality \eqref{eq:StrongLW-ddim} is tighter than \eqref{eqn:LW} because $n^{\text{max}}_{-1}\leq n_{-1}$.  Balister and Bollob\'as~\cite{BalisterB12} establish \eqref{eq:StrongLW-ddim} using arguments based on H\"older's inequality. For our equality analysis, motivated by \cite{Radhakrishnan03}, we use an information-theoretic proof based on the strong form in \eqref{eq: MT_UB_Strong}. This approach yields the following characterization for equality in \eqref{eq:StrongLW-ddim}.

\begin{theorem}\label{StrongLW-ddim_Equality}
    Let $S$, $n$, $n_{-i}$ for $i\in[2:d]$, $S_{x_1}$ for $x_1\in\mathbb{R}$, and $n^{\emph{max}}_{-1}$ be as in Theorem~\ref{StrongLW-ddim}. Then 
    \begin{align}\label{eq:StrongLW_equality}
            n^{d-1} = \left(\prod_{i=2}^{d}n_{-i}\right)n^{\emph{max}}_{-1}
    \end{align}
    if and only if, for every $x_1$ with $S_{x_1}\neq \emptyset$, there exist finite sets $\mathcal{X}^{(2)}_{x_1},\ldots,\mathcal{X}^{(d)}_{x_1}\subseteq \mathbb{R}$ satisfying
    \begin{align}
        S_{x_1} = \prod_{i=2}^{d} \mathcal{X}^{(i)}_{x_1},
    \end{align}
    and $|\mathcal{X}^{(i)}_{x_1}| = |\mathcal{X}^{(i)}_{x_1'}|$, $i\in[2:d]$, for all $x_1,x_1'$ such that $S_{x_1},S_{{x_1'}}\neq \emptyset$. 
\end{theorem}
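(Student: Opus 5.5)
Following the information-theoretic route of \cite{Radhakrishnan03}, I would let $X=(X_1,\ldots,X_d)$ be uniform on $S$, so $H(X_{[1:d]})=\log n$. I would apply the strong form \eqref{eq: MT_UB_Strong} with $\mathcal{F}=\mathcal{C}_{d-1}$ and the fractional partition $\gamma(S)=\frac{1}{d-1}$. For $A=[1:d]\setminus\{i\}$ with $i\in[2:d]$ we have $<A=\emptyset$, so the term is $H(X_A)\le \log n_{-i}$. For $A=[2:d]$ we have $<A=\{1\}$, and the term is
\begin{align*}
H(X_{[2:d]}\mid X_1)=\sum_{x_1} P(x_1)\log|S_{x_1}|\le \log n^{\max}_{-1},
\end{align*}
because conditionally on $X_1=x_1$ the vector $X_{[2:d]}$ is uniform on $S_{x_1}$. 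Multiplying by $d-1$ gives \eqref{eq:StrongLW-ddim}. Equality in \eqref{eq:StrongLW_equality} therefore holds if and only if three conditions hold simultaneously: (a) $\operatorname{Gap}(\mathcal{C}_{d-1},\gamma)=0$; (b) each $X_{-i}$, $i\in[2:d]$, is uniform on its projection; and (c) $|S_{x_1}|=n^{\max}_{-1}$ for every nonempty slice.

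\emph{Necessity.} I would apply Corollary~\ref{Strong_Han_Gap_Equality} with $k=d-1\in[2:d-1]$; this uses $d\ge3$. It shows that (a) is equivalent to $X_2,\ldots,X_d$ being conditionally independent given $X_1$. Since the conditional law given $X_1=x_1$ is uniform on $S_{x_1}$, a uniform distribution that is a product distribution has support equal to the product of its marginal supports. Hence $S_{x_1}=\prod_{i\ge2}\mathcal{X}^{(i)}_{x_1}$, where $\mathcal{X}^{(i)}_{x_1}$ is the projection of $S_{x_1}$ onto coordinate $i$. Next I must show the side lengths agree across slices. By (c), all products $\prod_i|\mathcal{X}^{(i)}_{x_1}|$ equal $m:=n^{\max}_{-1}$. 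By (b), the marginal of $X_{-i}$ is uniform. Fix a point $(x_1,x_{[2:d]\setminus\{i\}})$ in the projection; its probability is $|\mathcal{X}^{(i)}_{x_1}|/n$. Uniformity forces $|\mathcal{X}^{(i)}_{x_1}|=n/n_{-i}$ for every nonempty slice $x_1$, so the side lengths are the same across slices. I expect the main obstacle to be this bookkeeping: extracting the equal-cardinality condition from (b) and (c) rather than from (a).

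\emph{Sufficiency.} Assume the box structure with common side sizes $a_2,\ldots,a_d$, and let $t$ be the number of nonempty slices. Then $n=t\prod_i a_i$ and $n^{\max}_{-1}=\prod_i a_i$. Moreover $n_{-i}=\sum_{x_1}\prod_{j\ne i,1}a_j=t\prod_{j\ne1,i}a_j$, since distinct $x_1$ give disjoint pieces of the projection. I would then verify the identity directly:
\begin{align*}
\Bigl(\prod_{i=2}^d n_{-i}\Bigr)n^{\max}_{-1}=t^{d-1}\Bigl(\prod_j a_j\Bigr)^{d-2}\prod_j a_j=n^{d-1}.
\end{align*}
Alternatively, one could check (a)–(c) via Corollary~\ref{Strong_Han_Gap_Equality}, but the direct count is cleaner.
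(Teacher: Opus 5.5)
Your proof is correct. The necessity direction is the paper's argument. You use the same uniform $X$ on $S$, the same family $\mathcal{C}_{d-1}$ with $\gamma=\frac{1}{d-1}$, and Corollary~\ref{Strong_Han_Gap_Equality} with $k=d-1$ (which is where $d\ge 3$ enters) to make every nonempty slice a box. You then use uniformity of each $X_{-i}$, through $P(x_1,x_{[2:d]\setminus\{i\}})=|\mathcal{X}^{(i)}_{x_1}|/n$, to force $|\mathcal{X}^{(i)}_{x_1}|=n/n_{-i}$. Your condition (c) is not needed there. As the paper observes, it already follows from (a) and (b), so invoking it is harmless.

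The genuine difference is in sufficiency. The paper re-enters the entropy chain and checks that each inequality is tight. It uses the ``if'' half of Corollary~\ref{Strong_Han_Gap_Equality} for the first step and recomputes the $(d-1)$-dimensional marginals for the second step; tightness of the third step then follows. You bypass this and verify the identity directly from $n=t\prod_i a_i$, $n_{-i}=t\prod_{j\neq 1,i}a_j$ (the projection splits disjointly by the value of $x_1$), and $n^{\text{max}}_{-1}=\prod_i a_i$.

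Your count is shorter and fully elementary. It does not depend on the converse direction of Corollary~\ref{Strong_Han_Gap_Equality} or on any entropy facts. The paper's route instead stays within one framework. It exhibits the equality conditions exactly as the tightness conditions of the individual steps of the information-theoretic proof, and it organizes the sufficiency direction of Theorem~\ref{StrongVolLW_Equality} in the same way.
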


\if \extended 1%
A detailed proof of Theorem~\ref{StrongLW-ddim_Equality} is provided in Appendix~\ref{appendix:d}.
\fi
Unlike equality in \eqref{eqn:LW}, equality in \eqref{eq:StrongLW_equality} allows the sets $\mathcal{X}^{(i)}_{x_1}$ to vary with $x_1$, provided that their corresponding cardinalities remain constant across slices. Thus, every nonempty slice $S_{x_1}$ is a box, although $S$ itself need not be a box. Fig.~\ref{fig: StrongLW_equality} illustrates the equality conditions. 

\begin{figure}[t]
    \centering
    \makebox[0.48\columnwidth][c]{
        \begin{minipage}{\linewidth}
        \begin{tikzpicture}[scale=0.8,
            cube/.style={very thick,black},
            grid/.style={very thin,gray},
            axis/.style={->,blue,ultra thick},
            rotated axis/.style={->,blue,ultra thick}
        ]
            \path[use as bounding box] (-3.8,-1.5) rectangle (0.8,3.0);
            \draw[rotated axis] (0,0,0) -- (-3.25,0,0)
                node[anchor=east] {$z$};
            \draw[rotated axis] (0,0,0) -- (0,2.5,0)
                node[anchor=south west] {$y$};
            \draw[rotated axis] (0,0,0) -- (0,0,3.25)
                node[anchor=east] {$x$};
        
            \foreach \z in {0,1,2} {
        
                \def\planecolor{green!40}
                \ifnum\z=1
                    \def\planecolor{blue!40}
                \fi
                \ifnum\z=2
                    \def\planecolor{red!40}
                \fi
        
                \fill[\planecolor,opacity=0.5]
                    (0,0,\z) --
                    (-2,0,\z) --
                    (-2,2,\z) --
                    (0,2,\z) -- cycle;
            }
        
            \foreach \x in {0,-1,-2}
                \foreach \y in {0,1,2}
                    \foreach \z in {0,1,2} {
        
                        \ifnum\x>-2
                            \draw[black]
                                (\x,\y,\z) -- (\the\numexpr\x-1\relax,\y,\z);
                        \fi
        
                        \ifnum\y<2
                            \draw[black]
                                (\x,\y,\z) -- (\x,\the\numexpr\y+1\relax,\z);
                        \fi
        
                        \shade[ball color=purple!80]
                            (\x,\y,\z) circle (0.085cm);
                    }
        
        \end{tikzpicture}
        \centering
        \par\smallskip
        {\small (a)}
        \end{minipage}
    }
    \hfill
    \makebox[0.48\columnwidth][c]{
        \centering
        \begin{minipage}{\linewidth}
        \begin{tikzpicture}[
            scale=0.8,
            cube/.style={very thick,black},
            grid/.style={very thin,gray},
            axis/.style={->,blue,ultra thick},
            rotated axis/.style={->,blue,ultra thick}
        ]
        
            \path[use as bounding box] (-3.8,-1.5) rectangle (0.8,3);
        
        
            \draw[rotated axis]
                (0,0,0) -- (-4.25,0,0) node[anchor=east] {$z$};
        
            \draw[rotated axis]
                (0,0,0) -- (0,2.5,0) node[anchor=south west] {$y$};
        
            \draw[rotated axis]
                (0,0,0) -- (0,0,4.25) node[anchor=east] {$x$};

        
            \fill[red!40,opacity=0.5]
                (0,-1.25,0) --
                (-2,-1.25,0) --
                (-2,0.75,0) --
                (0,0.75,0) -- cycle;
        
            \fill[green!40,opacity=0.5]
                (0,0.35,1) --
                (-2.7,0.35,1) --
                (-2.7,2.35,1) --
                (0,2.35,1) -- cycle;
        
            \fill[blue!40,opacity=0.5]
                (0,0,2) --
                (-3,0,2) --
                (-3,2,2) --
                (0,2,2) -- cycle;

        
            \foreach \x in {0,-1,-2} {
                \foreach \y in {0,1,2} {
        
                    \pgfmathsetmacro{\yy}{\y-1.25}
        
                    \ifthenelse{\lengthtest{\x pt > -2 pt}}{
                        \pgfmathsetmacro{\xn}{\x-1}
                        \draw[black] (\x,\yy,0) -- (\xn,\yy,0);
                    }{}
        
                    \ifthenelse{\y<2}{
                        \pgfmathsetmacro{\yn}{\yy+1}
                        \draw[black] (\x,\yy,0) -- (\x,\yn,0);
                    }{}
        
                    \shade[ball color=purple!80]
                        (\x,\yy,0) circle (0.085cm);
                }
            }

        
        
        
        
        
            
            
            \foreach \x in {0,-1.35,-2.7} {
                \foreach \y in {0,1,2} {
            
                    \pgfmathsetmacro{\yy}{\y+0.35}
            
                    \ifthenelse{\lengthtest{\x pt > -2.7 pt}}{
                        \pgfmathsetmacro{\xn}{\x-1.35}
                        \draw[black] (\x,\yy,1) -- (\xn,\yy,1);
                    }{}
            
                    \ifthenelse{\lengthtest{\y pt < 2 pt}}{
                        \pgfmathsetmacro{\yn}{\yy+1}
                        \draw[black] (\x,\yy,1) -- (\x,\yn,1);
                    }{}
            
                    \shade[ball color=purple!80]
                        (\x,\yy,1) circle (0.085cm);
                }
            }

        
            \foreach \x in {0,-1.5,-3} {
                \foreach \y in {0,1,2} {
        
                    \ifthenelse{\lengthtest{\x pt > -3 pt}}{
                        \pgfmathsetmacro{\xn}{\x-1.5}
                        \draw[black] (\x,\y,2) -- (\xn,\y,2);
                    }{}
        
                    \ifthenelse{\y<2}{
                        \pgfmathsetmacro{\yn}{\y+1}
                        \draw[black] (\x,\y,2) -- (\x,\yn,2);
                    }{}
        
                    \shade[ball color=purple!80]
                        (\x,\y,2) circle (0.085cm);
                }
            }
        
        \end{tikzpicture}
        \centering
        \par\smallskip
        {\small (b)}
        \end{minipage}
    }
    \caption{(a) and (b) illustrate equality in \eqref{eqn:LW} and \eqref{eq:StrongLW_equality}, respectively.}
    \label{fig: StrongLW_equality}
\end{figure}

\subsection{Compact Bodies}
 As in~\cite{BalisterB12}, a \emph{body} in $\mathbb{R}^d$ is a compact set that is the closure of its interior. For $j\in[1:d]$, we write $V_{j}(\cdot)$ for the $j$-dimensional Lebesgue measure, also referred to as the $j$-dimensional volume. The following theorem gives an analogous projection inequality for bodies.

\begin{theorem}[\!\!{\cite[Theorem~2.2]{BalisterB12}}]\label{StrongVolLW}
Let $d\geq 3$, and let $K\subseteq\mathbb{R}^d$ be a body. For each $i\in[2:d]$, let $K_{-i}$ denote the coordinate projection of $K$ obtained by deleting the $i^{\text{th}}$-coordinate. For $x_1\in\mathbb{R}$, define $K_{x_1}=\{(x_2,\dots,x_d):(x_1,x_2,\dots,x_d)\in K\}$, and let 
\begin{align}\label{eqn:esssup}
    M_{-1}(K)=\operatorname*{ess\,sup}_{x_1\in\mathbb{R}}V_{d-1}(K_{x_1}),
\end{align}
where the essential supremum\footnote{Although \cite[Theorem~2.2]{BalisterB12} is stated using a supremum, its proof yields the stronger bound with the essential supremum.} is taken with respect to Lebesgue measure. Then
\begin{align}\label{eq:StrongVolLW}
        V_d(K)^{d-1}\leq \left(\prod_{i=2}^{d}V_{d-1}(K_{-i})\right)M_{-1}(K).
\end{align}
\end{theorem}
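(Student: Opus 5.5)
We prove \eqref{eq:StrongVolLW} by the continuous analogue of the entropy argument used for Theorem~\ref{StrongLW-ddim}, applying the strong form \eqref{eq: MT_UB_Strong} with differential entropies. Since $K$ is a body, $0<V_d(K)<\infty$. Let $X=(X_1,\dots,X_d)$ be uniform on $K$, so $h(X_{[1:d]})=\log V_d(K)$. We take $\mathcal{F}=\{[1:d]\setminus\{i\}: i\in[1:d]\}$ with the fractional partition $\gamma\equiv\frac{1}{d-1}$; each index lies in exactly $d-1$ members. For $S=[1:d]\setminus\{1\}=[2:d]$ the set $<S$ is $\{1\}$. For $S=[1:d]\setminus\{i\}$ with $i\ge 2$ the set $<S$ is empty, because $1\in S$. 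The strong form, valid for continuous variables with $\alpha$ a fractional partition, then gives
\begin{align*}
(d-1)\,h(X_{[1:d]})\le \sum_{i=2}^{d} h(X_{[1:d]\setminus\{i\}}) + h(X_{[2:d]}\mid X_1).
\end{align*}

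Each term on the right is bounded by a volume, using that a density supported on a set $A$ has differential entropy at most $\log V(A)$. For $i\ge 2$, the marginal $X_{[1:d]\setminus\{i\}}$ is supported on the projection $K_{-i}$, which is compact and hence measurable. This gives $h(X_{[1:d]\setminus\{i\}})\le \log V_{d-1}(K_{-i})$.

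For the conditional term, note that for almost every $x_1$ in the support of $X_1$ the conditional law of $X_{[2:d]}$ given $X_1=x_1$ is uniform on the slice $K_{x_1}$. This follows from Fubini, since the joint density is $\mathbbm{1}_K/V_d(K)$ and the marginal density of $X_1$ is $V_{d-1}(K_{x_1})/V_d(K)$. Hence $h(X_{[2:d]}\mid X_1=x_1)=\log V_{d-1}(K_{x_1})$ for almost every such $x_1$. Because $X_1$ is absolutely continuous, the Lebesgue-null exceptional set has zero $X_1$-probability. Averaging over $X_1$ therefore gives $h(X_{[2:d]}\mid X_1)\le \log M_{-1}(K)$. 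This step is why the essential supremum suffices.

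Combining the bounds,
\begin{align*}
(d-1)\log V_d(K)\le \sum_{i=2}^d \log V_{d-1}(K_{-i})+\log M_{-1}(K),
\end{align*}
and exponentiating gives \eqref{eq:StrongVolLW}.

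The main obstacle is measure-theoretic bookkeeping. We must check that all entropies are finite and well-defined: the densities are bounded and supported on bounded sets, so each entropy lies in $[-\infty,\log V)$. If some $h$ equals $-\infty$ the inequality is trivial, but we must make sure the strong form is not used as an undefined $\infty-\infty$ expression. Since $h(X_{[1:d]})$ is finite, the right-hand side cannot be $-\infty$, so each term is finite, or the claim is trivial anyway. We must also justify the identification of the conditional densities on slices and the inequality $\log V_{d-1}(K_{x_1})\le\log M_{-1}(K)$ almost everywhere. Slices of measure zero lie outside the support of $X_1$ up to null sets, so they cause no problem. Measurability of $x_1\mapsto V_{d-1}(K_{x_1})$ follows from Fubini–Tonelli, since $K$ is compact.
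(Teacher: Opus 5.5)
Your proposal is correct and follows essentially the same route as the paper's proof: take $X$ uniform on $K$, apply the differential-entropy strong form with $\mathcal{F}=\{[1:d]\setminus\{i\}: i\in[1:d]\}$ and $\gamma\equiv\frac{1}{d-1}$, bound $h(X_{[1:d]\setminus\{i\}})\le\log V_{d-1}(K_{-i})$ for $i\ge 2$, and bound $h(X_{[2:d]}\mid X_1)\le\log M_{-1}(K)$ using that the Lebesgue-null exceptional set of the essential supremum has zero $X_1$-probability. Your finiteness bookkeeping goes beyond the paper, which omits it, and it can be simplified: all densities involved are bounded and have bounded support, so every entropy term is automatically finite.
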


The following theorem establishes the conditions for equality to hold in \eqref{eq:StrongVolLW}.

\begin{theorem}\label{StrongVolLW_Equality}
    Let $K, K_{-i},$ for $i \in [2:d], K_{x_1}$, and $M_{-1}(K)$ be as in Theorem~\ref{StrongVolLW}. Then
    \begin{align}\label{eq:StrongVolLW_Equality}
        V_d(K)^{d-1} = \left(\prod_{i=2}^{d}V_{d-1}(K_{-i})\right)M_{-1}(K).
    \end{align}
    if and only if there exist constants $\ell_2,\dots,\ell_d>0$ such that, for Lebesgue-almost every $x_1\in\mathbb{R}$ with $V_{d-1}(K_{x_1})>0$, there exist measurable sets  $\mathcal{A}^{(2)}_{x_1},\ldots,\mathcal{A}^{(d)}_{x_1}\subseteq \mathbb{R}$ satisfying
    \begin{align}
        V_{d-1}\left( K_{x_1} \triangle \prod_{i = 2}^d \mathcal{A}^{(i)}_{x_1} \right) = 0,
    \end{align}
and for each $i\in[2:d]$, $V_1(\mathcal{A}_{x_1}^{(i)})=\ell_i$.
\end{theorem}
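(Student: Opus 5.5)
We follow the information-theoretic route of the finite-set case (Theorem~\ref{StrongLW-ddim_Equality}), now with differential entropy. Let $X=(X_1,\dots,X_d)$ be uniform on $K$, so $h(X)=\log V_d(K)$. We apply the strong form \eqref{eq: MT_UB_Strong} to the family $\mathcal{C}_{d-1}$ with $\gamma(S)=\frac{1}{d-1}$. For $S=[1:d]\setminus\{i\}$ with $i\ge 2$ we have $<S=\emptyset$, so the term is $h(X_{-i})\le \log V_{d-1}(K_{-i})$. For $S=[2:d]$ we have $<S=\{1\}$, and
\begin{align*}
h(X_{[2:d]}\mid X_1)=\int h(X_{[2:d]}\mid X_1=x_1)\,dP_{X_1}(x_1)\le \int \log V_{d-1}(K_{x_1})\,dP_{X_1}\le \log M_{-1}(K).
\end{align*}
The last step uses that $P_{X_1}$ is absolutely continuous with respect to Lebesgue measure, with density $V_{d-1}(K_{x_1})/V_d(K)$. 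Summing these bounds recovers \eqref{eq:StrongVolLW}. Equality in \eqref{eq:StrongVolLW_Equality} therefore holds if and only if all of the following hold simultaneously:
\begin{enumerate}
\item[(i)] $\operatorname{Gap}(\mathcal{C}_{d-1},\gamma)=0$;
\item[(ii)] each $X_{-i}$, $i\ge 2$, is uniform on $K_{-i}$ (up to null sets);
\item[(iii)] $V_{d-1}(K_{x_1})=M_{-1}(K)$ for $P_{X_1}$-a.e.\ $x_1$, i.e.\ for Lebesgue-a.e.\ $x_1$ with $V_{d-1}(K_{x_1})>0$.
\end{enumerate}
The conditional uniformity on fibers needed at the first inequality of the display is automatic.

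\emph{Necessity.} By Corollary~\ref{Strong_Han_Gap_Equality} (valid for continuous variables, as noted in Section~III), condition (i) says that $X_2,\dots,X_d$ are conditionally independent given $X_1$. For a.e.\ $x_1$, the conditional law of $X_{[2:d]}$ given $X_1=x_1$ is uniform on $K_{x_1}$. A uniform measure on a set of positive finite measure that is a product measure must be uniform on a product set up to null sets. Concretely, the density is $\mathbbm{1}_{K_{x_1}}/V_{d-1}(K_{x_1})=\prod_i f_i(x_i)$, and since the left side takes only the values $0$ and a single positive constant, each $f_i$ is a.e.\ constant on its support $\mathcal{A}^{(i)}_{x_1}$. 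This gives $V_{d-1}(K_{x_1}\triangle\prod_i\mathcal{A}^{(i)}_{x_1})=0$.

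It remains to show that the side lengths $\ell_i(x_1):=V_1(\mathcal{A}^{(i)}_{x_1})$ are a.e.\ constant. This is the main obstacle, since (iii) only fixes their product. We use (ii). The density of $X_{-i}$ at $(x_1,x_{-\{1,i\}})$ equals $\ell_i(x_1)/V_d(K)$ on its support, and it must be a.e.\ constant, equal to $1/V_{d-1}(K_{-i})$. Hence $\ell_i(x_1)=V_d(K)/V_{d-1}(K_{-i})$ for a.e.\ relevant $x_1$, which gives the constants $\ell_i$. Care is needed to handle the measure-zero exceptional sets via Fubini, and to note that the support of $X_{-i}$ is $K_{-i}$ up to a null set; the latter holds because $K$ is a body, so the closure of its interior controls the projection.

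\emph{Sufficiency.} Given the product structure with constant $\ell_i$, the slice volumes $V_{d-1}(K_{x_1})=\prod_i\ell_i$ are a.e.\ constant, so $M_{-1}(K)=\prod_i\ell_i$. We then compute directly. Let $L$ be the Lebesgue measure of the set of $x_1$ with $V_{d-1}(K_{x_1})>0$. Then $V_d(K)=L\prod_i\ell_i$. For the projections, $V_{d-1}(K_{-i})\ge L\prod_{j\ne i}\ell_j$ by Fubini, and the upper bound $V_{d-1}(K_{-i})\le L\prod_{j\ne i}\ell_j$ needs a check that $K_{-i}$ carries no extra mass beyond the a.e.\ fibers; this again uses the body assumption. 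Substituting gives $(\prod_{i\ge2}V_{d-1}(K_{-i}))M_{-1}(K)=L^{d-1}\prod_i\ell_i^{d-1}=V_d(K)^{d-1}$, which is \eqref{eq:StrongVolLW_Equality}. Alternatively, one can verify that (i)--(iii) hold and invoke the equivalence above.
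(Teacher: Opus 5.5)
Your necessity argument is essentially the paper's: conditional independence from Corollary~\ref{Strong_Han_Gap_Equality}, then the a.e.\ product structure of the slices, then comparing the density of $X_{-i}$ with the uniform density on $K_{-i}$ via Fubini. The body assumption plays no role there, because equality in $h(X_{-i})\le\log V_{d-1}(K_{-i})$ already says $X_{-i}$ is uniform on $K_{-i}$.

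\textbf{The gap is in sufficiency.} Your direct computation is correct except for the inequality $V_{d-1}(K_{-i})\le L\prod_{j\ne i}\ell_j$, which you defer to ``a check that uses the body assumption.'' That inequality is the whole content of the converse; it is what the paper's Claim~\ref{claim} establishes. Your alternative of verifying (ii) needs exactly the same fact. It is not a routine null-set check. The hypothesis controls $K$ only on almost every slice, and only up to null sets inside slices, whereas $K_{-i}$ sees every point of $K$. Without the body property the inequality is false. Add to $[0,1]^3$ the flat piece $\{(x_1,\tfrac12,x_3):x_1\in[0,1],\,x_3\in[2,3]\}$. This keeps $V_3$, $M_{-1}$ and the a.e.\ slices unchanged, so the hypothesis holds with $\ell_2=\ell_3=1$. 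Yet it doubles $V_2(K_{-2})$, and equality fails. Even for a body, writing $K_{-i}=\overline{U}$ with $U$ the open projection of $\operatorname{int}K$ does not settle it by itself, since the boundary of an open set can have positive measure.

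\textbf{The missing idea.} Define the fiber length $L_i(x_{-i}):=V_1(\{t:(x_{-i},t)\in K\})$, with $t$ in the $i$th slot. You need $L_i\ge\ell_i$ at \emph{every} point of $K_{-i}$, not just almost everywhere. Two ingredients give this.
\begin{enumerate}
\item Compactness of $K$ makes $L_i$ upper semicontinuous: fibers over nearby points lie in $\eta$-neighbourhoods of the limiting fiber. Hence $C:=\{L_i\ge\ell_i\}$ is closed.
\item Your Fubini computation gives $L_i\in\{0,\ell_i\}$ almost everywhere, and $L_i>0$ on $U$. So $U\setminus C$ is an open null set, hence empty.
\end{enumerate}
Therefore $K_{-i}=\overline{U}\subseteq C$, and this step is where $K=\overline{\operatorname{int}K}$ is used. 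It follows that $K_{-i}=\{L_i>0\}$, whose measure is $L\prod_{j\ne i}\ell_j$ by Fubini, and your computation then closes. The paper's Claim~\ref{claim} carries out the same argument explicitly. It approximates a point of $K$ by interior balls $B(p_r,\rho_r)$, picks a good slice $x_1^{(r)}$ and a nearby point whose fiber has length exactly $\ell_i$, and passes to the limit by compactness.
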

\if \extended 1%
A detailed proof of Theorem~\ref{StrongVolLW_Equality} is provided in Appendix~\ref{appendix:e}.
\fi
Fig.~\ref{fig: VolumeStrongLW_equality} illustrates examples of bodies satisfying these equality conditions.

\begin{figure}[t]
    \centering
    \makebox[0.48\columnwidth][c]{
        \begin{minipage}{\linewidth}
        \tdplotsetmaincoords{65}{135}
        \begin{tikzpicture}[tdplot_main_coords, scale=0.4,
            cube/.style={very thick,black},
            grid/.style={very thin,gray},
            axis/.style={->,blue,ultra thick},
            rotated axis/.style={->,blue,ultra thick}]
        
        \draw[rotated axis] (0,0,0) -- (6,0,0) node[anchor=north east]{$z$};
        \draw[rotated axis] (0,0,0) -- (0,3.5,0) node[anchor=north west]{$y$};
        \draw[rotated axis] (0,0,0) -- (0,0,4.7) node[anchor=south]{$x$};
        
        \coordinate (P) at (2,2,0);
        \coordinate (Q) at (5,2,0);
        \coordinate (R) at (5,4,0);
        \coordinate (S) at (2,4,0);
        
        \coordinate (T) at (5,3,7);
        \coordinate (U) at (8,3,7);
        \coordinate (V) at (8,5,7);
        \coordinate (W) at (5,5,7);
        
        \coordinate (Txy) at (5,3,0);
        \coordinate (Uxy) at (8,3,0);
        \coordinate (Vxy) at (8,5,0);
        \coordinate (Wxy) at (5,5,0);
        
        \draw[thick] (P)--(Q)--(R)--(S)--cycle;
        
        \draw[thick] (T)--(U)--(V)--(W)--cycle;
        
        \draw[thick] (P)--(T);
        \draw[thick] (Q)--(U);
        \draw[thick] (R)--(V);
        \draw[thick] (S)--(W);
        
        \draw[dashed, red, thick] (T) -- (Txy);
        \draw[dashed, red, thick] (U) -- (Uxy);
        \draw[dashed, red, thick] (V) -- (Vxy);
        \draw[dashed, red, thick] (W) -- (Wxy);
        
        \draw[red, thick] (Txy)--(Uxy)--(Vxy)--(Wxy)--cycle;
        
        \foreach \p/\lab in {
        P/{P(2,2,0)},
        Q/{Q(5,2,0)},
        R/{R(5,4,0)},
        S/{S(2,4,0)},
        T/{T(5,3,4)},
        U/{U(8,3,4)},
        V/{V(8,5,4)},
        W/{W(5,5,4)}
        }
        {
            \fill (\p) circle (1.5pt);
        }
        
        \foreach \p in {Txy,Uxy,Vxy,Wxy}
        {
            \fill[red] (\p) circle (1.2pt);
        }
        
        \fill[red!20, opacity=0.5]
        (T) -- (U) -- (Uxy) -- (Txy) -- cycle;
        
        \fill[red!20, opacity=0.5]
        (U) -- (V) -- (Vxy) -- (Uxy) -- cycle;
        
        \fill[red!20, opacity=0.5]
        (V) -- (W) -- (Wxy) -- (Vxy) -- cycle;
        
        \fill[red!20, opacity=0.5]
        (W) -- (T) -- (Txy) -- (Wxy) -- cycle;
        
        \fill[green!20, opacity=0.5]
        (P) -- (Q) -- (U) -- (T) -- cycle;
        
        \fill[green!20, opacity=0.5]
        (Q) -- (R) -- (V) -- (U) -- cycle;
        
        \fill[green!20, opacity=0.5]
        (R) -- (S) -- (W) -- (V) -- cycle;
        
        \fill[green!20, opacity=0.5]
        (S) -- (P) -- (T) -- (W) -- cycle;
        
        \end{tikzpicture}
        \centering
        \par\smallskip
        {\small (a)}
        \end{minipage}
    }
    \hfill
    \makebox[0.48\columnwidth][c]{
        \begin{minipage}{\linewidth}
        \tdplotsetmaincoords{65}{135}
        \begin{tikzpicture}[tdplot_main_coords, scale=0.35,
            cube/.style={very thick,black},
            grid/.style={very thin,gray},
            axis/.style={->,blue,ultra thick},
            rotated axis/.style={->,blue,ultra thick}]
        
        
        \def\N{40}          
        \def\zmax{7}        
        \def\dx{0.323}       
        \def\dy{0.85}       
        
        \def\Ax{1.5}
        \def\Ay{0.8}
        
        
        \foreach \i in {0,...,\N} {
        
            \pgfmathsetmacro{\zz}{\zmax*\i/\N}
        
            \pgfmathsetmacro{\xx}{\Ax*sin(50*\zz)}
            \pgfmathsetmacro{\yy}{\Ay*cos(35*\zz)}
        
            \coordinate (A\i) at ({\xx-\dx},{\yy-\dy},\zz);
            \coordinate (B\i) at ({\xx+\dx},{\yy-\dy},\zz);
            \coordinate (C\i) at ({\xx+\dx},{\yy+\dy},\zz);
            \coordinate (D\i) at ({\xx-\dx},{\yy+\dy},\zz);
        }

        
        \foreach \i [evaluate=\i as \j using int(\i+1)] in {0,...,\numexpr\N-1\relax} {
        
            \fill[green!20, opacity=0.55]
                (A\i)--(B\i)--(B\j)--(A\j)--cycle;
        
            \fill[green!20, opacity=0.55]
                (B\i)--(C\i)--(C\j)--(B\j)--cycle;
        
            \fill[green!20, opacity=0.55]
                (C\i)--(D\i)--(D\j)--(C\j)--cycle;
        
            \fill[green!20, opacity=0.55]
                (D\i)--(A\i)--(A\j)--(D\j)--cycle;
        }

        
        \draw[dashed, thin]
            (A0)
            \foreach \i in {1,...,\N} { -- (A\i) };
        
        \draw[dashed, thin]
            (B0)
            \foreach \i in {1,...,\N} { -- (B\i) };
        
        \draw[dashed, thin]
            (C0)
            \foreach \i in {1,...,\N} { -- (C\i) };
        
        \draw[dashed, thin]
            (D0)
            \foreach \i in {1,...,\N} { -- (D\i) };

        
        \fill[green!20, opacity=0.55]
            (A0)--(B0)--(C0)--(D0)--cycle;
        
        \fill[green!20, opacity=0.55]
            (A\N)--(B\N)--(C\N)--(D\N)--cycle;
        
        \draw[thick]
            (A0)--(B0)--(C0)--(D0)--cycle;
        
        \draw[thick]
            (A\N)--(B\N)--(C\N)--(D\N)--cycle;

        \def\axisoffset{1.0}
        
        \draw[rotated axis] (-\axisoffset,0,0) -- (3-\axisoffset,0,0) node[anchor=north east]{$z$};
        \draw[rotated axis] (-\axisoffset,0,0) -- (-\axisoffset,3,0) node[anchor=north west]{$y$};
        \draw[rotated axis] (-\axisoffset,0,0) -- (-\axisoffset,0,8) node[anchor=south]{$x$};
        
        \end{tikzpicture}
        \centering
        \par\smallskip
        {\small (b)}
        \end{minipage}
    }
    \caption{(a) and (b) illustrate bodies satisfying the equality conditions in Theorem~\ref{StrongVolLW_Equality}.}
    \label{fig: VolumeStrongLW_equality}
\end{figure}

\section{An Extremal Graph Theory Problem via Fractional Shearer's Lemma}
We present an application of fractional Shearer lemma, i.e., the weak form in \cite{MadimanT10}, to an extremal graph-theoretic problem. Let $n\in\mathbb{N}$, let $\mathcal{F}$ be a prescribed family of nonempty subsets of $[1:n]$, and let $G$ be a simple undirected graph with vertex set $V(G)\subseteq\{-1,1\}^n$. For distinct vertices $x^n,y^n\in V(G)$, the vertices $x^n$ and $y^n$ are adjacent if and only if the set of coordinates on which they differ belongs to $\mathcal{F}$; that is,
\begin{align}\label{edgedefn}
    \{x^n,y^n\}\in E(G)
    \quad\Leftrightarrow\quad
    \Delta(x^n,y^n)\in\mathcal{F},
\end{align}
where
$\Delta(x^n,y^n)=\{i\in[1:n]:x_i\neq y_i\}$.

We are interested in estimating how large $|E(G)|$ can be as a function of $|V(G)|$ and relevant combinatorial properties of $\mathcal{F}$. For each $S\in\mathcal{F}$, let $E_S(G)$ denote the set of edges $\{x^n,y^n\}$ satisfying $\Delta(x^n,y^n)=S$. We first establish bounds on weighted combinations of the pattern-specific edge counts $|E_S(G)|$. These bounds, in turn, yield an upper bound on the total number of edges $|E(G)|=\sum_{S\in\mathcal{F}}|E_S(G)|.$

Because the Hamming distance between $x^n$ and $y^n$ equals $|\Delta(x^n,y^n)|$, our model recovers earlier distance-based models~\cite{Sason22,BoucheronLM13}. In particular, the choice $\mathcal{F}=\{S\subseteq[1:n]:1\leq |S|\leq\tau\}$ gives the model studied by Sason~\cite{Sason22}, while its specialization to $\tau=1$ gives the setting considered by Boucheron \emph{et al.}~\cite[Chapter~4]{BoucheronLM13}. 

Our model is also inspired by a conceptually related transition from cardinality-based constraints to family-based constraints in secure multiparty computation (MPC). The BGW protocol~\cite{BenOrGW88} is formulated under a threshold adversary model, in which corruptible coalitions are specified by a cardinality constraint, whereas Hirt and Maurer~\cite{HirtM97} studied general adversary structures $\mathcal{A}$ that specify which coalitions may be corrupted.

Building on the confusability graph viewpoint for the model in \cite{Sason22}, our setup admits a more general interpretation. Consider a noisy channel with binary $n$-length sequences as inputs. Suppose two input sequences $x^n,y^n\in V(G)$ may produce a common channel output with positive probability if and only if $\Delta(x^n,y^n)\in \mathcal{F}$. In the models considered in \cite{BoucheronLM13} and \cite{Sason22}, confusability depends only on the Hamming distances due to the specific choices of $\mathcal{F}$, whereas in our setting it potentially depends on the full pattern of coordinates in which they differ. This distinction is reflected in our use of the fractional version of Shearer's lemma, whereas Han's inequality suffices for the symmetric, cardinality-based models considered in \cite{BoucheronLM13}, \cite{Sason22}.

We need the following definitions. For any $x^n = (x_1,\ldots,x_n) \in \{-1,1\}^n, S\subseteq [1:n]$, let $\Tilde{x}^{(S)}$ denote an $n-|S|$ length vector obtained by dropping the bits corresponding to positions in $S$ and let $\Bar{x}^{(S)}$ denote an $n$-length vector obtained by flipping the bits corresponding to the specific positions in $S$ in $x^n$. That is, for $S=\{i_1,\ldots,i_d\}$ with $i_1<\ldots<i_d$, 
$\Tilde{x}^{(S)} := (x_1,\ldots,x_{i_1-1},x_{i_1+1},\ldots,x_{i_d-1},x_{i_d+1},\ldots,x_n),$ and $\Bar{x}^{(S)}$ is given by $\Bar{x}^{(S)}_j = -x_j$ for $j \in S$, and $\Bar{x}^{(S)}_j = x_j$ otherwise. Further, for $S \subseteq [1:n]$, we define two integers $m_S$ and $\ell_S$ as follows.
\begin{align}\label{eq: m_S}
    2 \leq m_S \leq \min_{\substack{x^n \in V(G):\\ \bar{x}^{(S)} \in V(G)}} \left|\left\{ y^n \in V(G) : \Tilde{y}^{(S)} = \Tilde{x}^{(S)} \right\}\right|,    
\end{align}
\begin{align}\label{eq: l_S}
    1 \leq \ell_S \leq \min_{\substack{x^n \in V(G):\\ \bar{x}^{(S)} \notin V(G)}} \left|\left\{ y^n \in V(G) : \Tilde{y}^{(S)} = \Tilde{x}^{(S)} \right\}\right|. 
\end{align}

\begin{theorem}\label{ExtremalGraph}
    Let $\mathcal{F}$ be a family of subsets of $[1:n]$, and let $G$ be a simple undirected graph with set of vertices $V(G) \subseteq \{-1,1\}^n, n \in \mathbb{N}$ and set of edges $E(G)=\{\{x^n,y^n\}:x^n,y^n\in V(G) \ \text{s.t.}\ \Delta(x^n,y^n)\in \mathcal{F}\}$. Let integers $m_S$ and $\ell_S$ be as defined in \eqref{eq: m_S} and \eqref{eq: l_S} respectively, then we have the following.
    \begin{enumerate}[leftmargin=*]
        \item Let $\alpha:\bar{\mathcal{F}}\rightarrow \mathbb{R}_+$ be a fractional covering with respect to the complementary family $\bar{\mathcal{F}}=\{S^{\mathrm{c}}: S\in\mathcal{F}\}$. Then 
        \begin{align}
            \sum_{S \in \mathcal{F}} \frac{\alpha(S^{\mathrm{c}})}{w(\alpha)-1} \left( \frac{2|E_S(G)|}{|V(G)|} \log{\frac{m_S}{\ell_S}} + \log{\ell_S}\right) \nonumber \\
            \leq \log{|V(G)|}, \label{eq: EGBound}
        \end{align}
        where, for each $S\in\mathcal{F}$, $E_S(G)$ denotes the set of edges $\{x^n,y^n\}$ such that $\Delta(x^n,y^n)=S$.

        \item Let $2\leq m_{\mathcal{F}}\leq \min_{S\in\mathcal{F}}m_S$ and $1\leq\ell_{\mathcal{F}}\leq \min_{S\in\mathcal{F}}\ell_S$. If $\log\frac{m_{\mathcal{F}}}{\ell_{\mathcal{F}}}>0$
        \begin{align}
            &\hspace{-0.5em}|E(G)|=\sum_{S\in\mathcal{F}}|E_S(G)| \nonumber \\
            & \hspace{-0.55em}\leq \frac{|V(G)|}{2}\min \left\{\frac{ r(\mathcal{F})\log{|V(G)|} - |\mathcal{F}| \log{\ell_{\mathcal{F}}}}{ \log\frac{m_{\mathcal{F}}}{\ell_{\mathcal{F}}}}, |\mathcal{F}|\right\},
        \end{align}
        where $r(\mathcal{F}) = \max_{i \in [1:n]} |\{S \in \mathcal{F}: i \in S\}|$.
        \item Setting the minimum admissible values of $m_{\mathcal{F}} = 2$ and $\ell_{\mathcal{F}} = 1$ yields 
        \begin{align}\label{eq: EGBound_Default}
            |E(G)| & \leq  \frac{|V(G)|}{2} \min\left\{r(\mathcal{F})\log{|V(G)|},|\mathcal{F}|\right\}.
        \end{align}
    \end{enumerate}
\end{theorem}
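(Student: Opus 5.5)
The plan is to take $X^n$ uniformly distributed on $V(G)$, so that $H(X^n)=\log|V(G)|$, and to apply the weak form \eqref{eq: MT_UB} of Theorem~\ref{FractionalSL} to the complementary family $\bar{\mathcal{F}}$. Throughout I read $w(\alpha)$ as the total weight $\sum_{S\in\mathcal{F}}\alpha(S^{\mathrm c})$, which I take to be the intended meaning since the statement does not define it. Since $\alpha$ is a fractional covering of $\bar{\mathcal{F}}$, the weak form gives $H(X^n)\le\sum_{S}\alpha(S^{\mathrm c})H(X_{S^{\mathrm c}})$. Subtracting both sides from $w(\alpha)H(X^n)$ and using the chain rule $H(X^n)-H(X_{S^{\mathrm c}})=H(X_S\mid X_{S^{\mathrm c}})$ yields
\begin{align*}
\sum_{S\in\mathcal{F}}\alpha(S^{\mathrm c})\,H(X_S\mid X_{S^{\mathrm c}})\le (w(\alpha)-1)\log|V(G)|.
\end{align*}
This is the information-theoretic core of the argument. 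The rest is a combinatorial lower bound on each conditional entropy.

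\textbf{Lower bound on each conditional entropy.} Fix $S\in\mathcal{F}$. Conditioned on $X_{S^{\mathrm c}}=\tilde x^{(S)}$, the variable $X^n$ is uniform on the fiber $\{y^n\in V(G):\tilde y^{(S)}=\tilde x^{(S)}\}$. Hence
\begin{align*}
H(X_S\mid X_{S^{\mathrm c}})=\frac{1}{|V(G)|}\sum_{x^n\in V(G)}\log N_S(x^n),
\end{align*}
where $N_S(x^n)$ is the size of the fiber containing $x^n$. By \eqref{eq: m_S} and \eqref{eq: l_S}, $N_S(x^n)\ge m_S$ when $\bar x^{(S)}\in V(G)$ and $N_S(x^n)\ge \ell_S$ otherwise. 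Each edge in $E_S(G)$ is exactly a pair $\{x^n,\bar x^{(S)}\}$, so the number of $x^n$ with $\bar x^{(S)}\in V(G)$ is $2|E_S(G)|$. This gives
\begin{align*}
H(X_S\mid X_{S^{\mathrm c}})\ge \frac{2|E_S(G)|}{|V(G)|}\log m_S+\Big(1-\frac{2|E_S(G)|}{|V(G)|}\Big)\log\ell_S.
\end{align*}
Substituting this into the core inequality and dividing by $w(\alpha)-1$ gives \eqref{eq: EGBound}. I assume $w(\alpha)>1$, which holds whenever the bound is nontrivial.

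\textbf{Part 2.} Choose the uniform weight $\alpha(S^{\mathrm c})=1/(|\mathcal{F}|-r(\mathcal{F}))$. This is a fractional covering of $\bar{\mathcal{F}}$, because each $i$ lies in $S^{\mathrm c}$ for $|\mathcal{F}|-|\{S\ni i\}|\ge|\mathcal{F}|-r(\mathcal{F})$ members. Then $w(\alpha)-1=r(\mathcal{F})/(|\mathcal{F}|-r(\mathcal{F}))$, so each coefficient in \eqref{eq: EGBound} becomes $1/r(\mathcal{F})$.

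Before summing I replace $m_S,\ell_S$ by $m_{\mathcal{F}},\ell_{\mathcal{F}}$. This is the subtle step. It is not legitimate in the form $\log(m_S/\ell_S)$, because $\ell_S\ge\ell_{\mathcal{F}}$ pushes that ratio the wrong way. It is legitimate in the convex-combination form of the entropy bound above. There both weights $2|E_S|/|V|$ and $1-2|E_S|/|V|$ are nonnegative, since $E_S(G)$ is a matching and so $2|E_S|\le|V|$. Moreover $\log m_S\ge\log m_{\mathcal{F}}$ and $\log\ell_S\ge\log\ell_{\mathcal{F}}$.

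After the replacement, summing over $S$ and rearranging with $\log(m_{\mathcal{F}}/\ell_{\mathcal{F}})>0$ gives the first term of the minimum. The second term, $|\mathcal{F}|$, follows directly from the matching bound $|E_S(G)|\le |V(G)|/2$.

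The degenerate case $r(\mathcal{F})=|\mathcal{F}|$ admits no such covering and must be handled separately. In that case I will show the first term is at least $|\mathcal{F}|$, so only the trivial bound matters. If $E(G)=\emptyset$ there is nothing to prove. Otherwise $|V(G)|\ge m_{\mathcal{F}}$, so $\log|V(G)|-\log\ell_{\mathcal{F}}\ge\log(m_{\mathcal{F}}/\ell_{\mathcal{F}})$, and the first term is at least $|\mathcal{F}|$.

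\textbf{Part 3.} Substituting $m_{\mathcal{F}}=2$ and $\ell_{\mathcal{F}}=1$ into Part 2 gives \eqref{eq: EGBound_Default}.
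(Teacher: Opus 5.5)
Your proposal is correct and follows the paper's proof essentially step for step. The shared steps are: $X^n$ uniform on $V(G)$; the weak fractional Shearer bound on $\bar{\mathcal{F}}$ turned into $\sum_S \alpha(S^{\mathrm c})H(X_S\mid X_{S^{\mathrm c}})\le (w(\alpha)-1)\log|V(G)|$; the fiber-size lower bound with the $2|E_S(G)|$ matching count; the uniform covering $1/(|\mathcal{F}|-r(\mathcal{F}))$; the replacement of $m_S,\ell_S$ in the convex-combination form; and the trivial matching bound. Your separate treatment of $r(\mathcal{F})=|\mathcal{F}|$, where $\bar{\mathcal{F}}$ admits no fractional covering and the paper's $\alpha^*$ is undefined, is a small improvement over the paper. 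Also, $w(\alpha)>1$ holds automatically because the members of $\mathcal{F}$ are nonempty, so no extra assumption is needed there.
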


\begin{remark}\label{BLMSasonRemark}
Theorem~\ref{ExtremalGraph} recovers the bound in \cite[Theorem~4.2]{BoucheronLM13} by taking $\mathcal{F}=\{\{i\}:i\in[1:n]\}$. It also recovers the bounds in \cite[Theorem~2]{Sason22} by first applying the theorem with $\mathcal{F}=\{S\subseteq[1:n]:|S|=d\}$ separately for each
$d\in[1:\tau]$ and then summing the resulting bounds.
\end{remark}

\begin{example}
    Let $1\leq k<n$, and let $\mathcal{F}$ be the family of all cyclically
    consecutive subsets of $[1:n]$ of cardinality $k$, i.e., let
    $S_i=\{i,i+1,\ldots,i+k-1\}$, with indices interpreted modulo $n$, and
    set $\mathcal{F}=\{S_i:i\in[1:n]\}$. Consider the graph $G$ defined by
    \eqref{edgedefn}, and suppose that $|V(G)|<2^{n/k}$.

    We have $|\mathcal{F}|=n$, and every index belongs to exactly $k$
    members of $\mathcal{F}$, so $r(\mathcal{F})=k$. Hence, Theorem~\ref{ExtremalGraph} gives
    \begin{align}
        |E(G)|
        \leq \frac{k|V(G)|\log|V(G)|}{2}.
    \end{align}
\end{example}
\if \extended 1%
A detailed proof of Theorem~\ref{ExtremalGraph} is provided in Appendix~\ref{appendix:f}.
\fi

\subsection{An Edge-Isoperimetric Inequality on the Binary Hypercube}
We next apply Theorem~\ref{ExtremalGraph} to derive an edge-isoperimetric inequality concerning the \emph{influences} of subsets of the binary hypercube~\cite[Chapter~4]{BoucheronLM13}. Given a subset $\mathcal{A}\subseteq\{-1,1\}^n$ and a prescribed family $\mathcal{F}$ of subsets of $[1:n]$, we consider, for each $S\in\mathcal{F}$, how simultaneously flipping the coordinates indexed by $S$ affects membership in $\mathcal{A}$, and then average the resulting influences over $\mathcal{F}$.

 Consider a random vector $X^n=(X_1,X_2,\dots,X_n)$ distributed uniformly over $\{-1,1\}^n$. Let $\bar{X}^{(S)}$ denote the vector obtained by flipping the bits corresponding to the set of positions $S \subseteq [1:n]$, i.e., $\Bar{X}^{(S)}$ is given by $\Bar{X}^{(S)}_j = -X_j$ for $j \in S$, and $\Bar{X}^{(S)}_j = X_j$ otherwise. We define the influence associated with $S \subseteq [1:n]$ on $\mathcal{A}\subseteq \{-1,1\}^n$ as the probability that toggling the bits corresponding to $S$ changes the membership of $X^n$ in $\mathcal{A}$. Formally,
\begin{align}
    I_S(\mathcal{A}) := \mathbb{P}(\mathbbm{1}\{X^n \in \mathcal{A}\} \neq \mathbbm{1}\{\bar{X}^{(S)} \in \mathcal{A}\}).
\end{align}
For a family $\mathcal{F}$ of subsets of $[1:n]$, we define the average influence of subsets of random variables corresponding to the family $\mathcal{F}$ on $\mathcal{A}$ by
\begin{align}
    I_{\mathcal{F}}(\mathcal{A}) := \frac{1}{|\mathcal{F}|} \sum_{S \in \mathcal{F}} I_S(\mathcal{A}).
\end{align}
The following theorem provides a lower bound on $I_{\mathcal{F}}(\mathcal{A})$.

\begin{theorem}\label{InfluenceLowerBound}
    Let $X^n$ be a uniform random vector over $\{-1,1\}^n$, and $\mathcal{A} \subseteq \{-1,1\}^n$. Let $\mathcal{F}$ be any family of subsets of $[1:n]$ and let $m_{\mathcal{F}}$ and $\ell_{\mathcal{F}}$ be admissible integers as in Theorem~\ref{ExtremalGraph}, with $\mathcal{A}$ in place of $V(G)$. If $\log\frac{m_{\mathcal{F}}}{\ell_{\mathcal{F}}}>0$, then 
    \begin{align}
        I_{\mathcal{F}}(\mathcal{A}) \geq \frac{L|\mathcal{A}|} {2^{n-1}\log\frac{m_{\mathcal{F}}}{\ell_{\mathcal{F}}}},
    \end{align}
    where $L = \max\left\{\log{m_{\mathcal{F}}} - \frac{r(\mathcal{F})}{|\mathcal{F}|} \log{|\mathcal{A}|}, 0\right\}$ and $r(\mathcal{F}) = \max_{i \in [1:n]} |\{S \in \mathcal{F}: i \in S\}|$.
\end{theorem}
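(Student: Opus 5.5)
We prove Theorem~\ref{InfluenceLowerBound} by applying part~2 of Theorem~\ref{ExtremalGraph} to the graph $G$ with $V(G)=\mathcal{A}$ and edges defined by \eqref{edgedefn}. The task is then to rewrite the edge bound in terms of influences.

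\emph{Step 1: influences in terms of edges.} For each $S\in\mathcal{F}$ with $S\neq\emptyset$, the map $x^n\mapsto\bar{x}^{(S)}$ is a fixed-point-free involution on $\{-1,1\}^n$. It pairs the cube into $2^{n-1}$ pairs $\{x^n,\bar{x}^{(S)}\}$, and $\Delta(x^n,\bar{x}^{(S)})=S$. A pair lies inside $\mathcal{A}$ exactly when it is an edge in $E_S(G)$. Each point of $\mathcal{A}$ lies in exactly one such pair, so the number of pairs with exactly one endpoint in $\mathcal{A}$ is $|\mathcal{A}|-2|E_S(G)|$. Hence
\begin{align}
I_S(\mathcal{A}) = \frac{2\left(|\mathcal{A}|-2|E_S(G)|\right)}{2^n} = \frac{|\mathcal{A}|-2|E_S(G)|}{2^{n-1}}.
\end{align}
Averaging over $\mathcal{F}$ gives
\begin{align}
I_{\mathcal{F}}(\mathcal{A})=\frac{|\mathcal{F}||\mathcal{A}|-2|E(G)|}{2^{n-1}|\mathcal{F}|}.
\end{align}
If $\mathcal{F}$ contains $\emptyset$, that member contributes no edges and zero influence. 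I would either exclude it, since the graph section assumes nonempty members, or note that it only weakens the bound.

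\emph{Step 2: insert the edge bound.} Part~2 of Theorem~\ref{ExtremalGraph} gives
\begin{align}
2|E(G)|\le |\mathcal{A}|\,\frac{r(\mathcal{F})\log|\mathcal{A}|-|\mathcal{F}|\log\ell_{\mathcal{F}}}{\log(m_{\mathcal{F}}/\ell_{\mathcal{F}})}.
\end{align}
Substituting this into Step~1 yields
\begin{align}
I_{\mathcal{F}}(\mathcal{A})\ge \frac{|\mathcal{A}|}{2^{n-1}\log\frac{m_{\mathcal{F}}}{\ell_{\mathcal{F}}}}\left(\log\frac{m_{\mathcal{F}}}{\ell_{\mathcal{F}}}-\frac{r(\mathcal{F})}{|\mathcal{F}|}\log|\mathcal{A}|+\log\ell_{\mathcal{F}}\right).
\end{align}
The bracket simplifies to $\log m_{\mathcal{F}}-\frac{r(\mathcal{F})}{|\mathcal{F}|}\log|\mathcal{A}|$. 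Since $I_{\mathcal{F}}(\mathcal{A})\ge0$ trivially, we may take the maximum with $0$, which gives the factor $L$ and completes the proof.

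The main obstacle is the counting in Step~1. One must check that the $S$-flip pairing matches $E_S(G)$ exactly, with each edge counted once, and that both ordered configurations of a boundary pair are counted. This is what produces the normalization $2^{n-1}$ rather than $2^n$. The rest is substitution.
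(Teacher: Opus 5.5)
Your proposal is correct and is essentially the paper's proof. The paper counts the ordered boundary $\mathcal{B}_{\mathcal{F}}(\mathcal{A})$ via $|\mathcal{F}||\mathcal{A}|=2|E(G)|+|\mathcal{B}_{\mathcal{F}}(\mathcal{A})|$ and $I_{\mathcal{F}}(\mathcal{A})=2|\mathcal{B}_{\mathcal{F}}(\mathcal{A})|/(|\mathcal{F}|2^n)$, which is your per-$S$ identity summed over $\mathcal{F}$. It then inserts part~2 of Theorem~\ref{ExtremalGraph}, obtaining the $\max\{\cdot,0\}$ from the trivial $|\mathcal{F}|$ term in the minimum rather than from $I_{\mathcal{F}}(\mathcal{A})\geq 0$.

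One small correction to your aside: putting $\emptyset$ into $\mathcal{F}$ would make the claimed inequality stronger, not weaker, because the left side averages in a zero while $|\mathcal{F}|$ grows and $r(\mathcal{F})$ does not. That case is vacuous anyway, since no admissible $m_{\emptyset}\geq 2$ exists when $\mathcal{A}\neq\emptyset$.
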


\begin{remark}
Theorem~\ref{InfluenceLowerBound} recovers the bound in \cite[Theorem~4.3]{BoucheronLM13} by taking $\mathcal{F}=\{\{i\}:{i}\in[1:n]\}$, and the bound in
\cite[Theorem~3]{Sason22} by taking $\mathcal{F}=\{S\subseteq[1:n]:|S|=d\}$.
\end{remark}

\if \extended 1%
A detailed proof of Theorem~\ref{InfluenceLowerBound} is provided in Appendix~\ref{appendix:g}.
\fi
The special case of Theorem~\ref{InfluenceLowerBound} corresponding to \cite[Theorem~4.3]{BoucheronLM13} is known to improve upon the bound obtained from the Efron--Stein inequality in certain regimes~\cite[Remark~4.2]{BoucheronLM13}, and it also follows from the logarithmic Sobolev inequality~\cite[Section~9.2]{BoucheronLM13}. These connections suggest that the family-based inequality established in Theorem~\ref{InfluenceLowerBound} may fit within a broader functional-inequality framework, the investigation of which constitutes an interesting direction for future work. Such a direction may also be relevant to Boolean functions, game theory, complexity theory, and other areas in which influences on the binary hypercube play a fundamental role~\cite[Chapters~9 and 10]{BoucheronLM13}.






\clearpage

\if \extended 1
\IEEEtriggeratref{13}
\fi
\if \extended 0
\IEEEtriggeratref{13}
\fi

\bibliographystyle{IEEEtran}
\bibliography{bibliofileC}

\clearpage


\if \extended 1

\appendices

\section{Proof of Theorem~\ref{Gap_Decomposition}}\label{appendix:a}
\emph{Proof of Part 1).} 
\begin{align}
    &\operatorname{Gap}(\mathcal{F},\gamma) \nonumber \\
    & =  \sum\limits_{S \in \mathcal{F}} \gamma(S)H(X_S|X_{<S}) - H(X_{[1:n]}) \\
    & = \sum\limits_{S \in \mathcal{F}} \gamma(S) \sum\limits_{i \in S} H(X_i|X_{[1:i-1] \cap S},X_{<S}) \nonumber \\
    & \quad\quad - \sum\limits_{i = 1}^n H(X_i|X_{[1:i-1]}) \label{eq: GD_7}\\
    & = \sum\limits_{S \in \mathcal{F}} \gamma(S) \sum\limits_{i \in S} H(X_i|X_{[1:i-1] \cap S},X_{<S}) \nonumber \\
    & \quad\quad - \sum\limits_{i = 1}^n H(X_i|X_{[1:i-1]}) \left( \sum\limits_{\substack{S \in \mathcal{F}:\\ i \in S}} \gamma(S) \right) \label{eq: GD_8} \\
    & = \sum\limits_{S \in \mathcal{F}} \gamma(S) \sum\limits_{i \in S} H(X_i|X_{[1:i-1] \cap S},X_{<S}) \nonumber \\
    & \quad\quad - \sum\limits_{S \in \mathcal{F}} \gamma(S) \sum\limits_{i \in S}H(X_i|X_{[1:i-1]}) \label{eq: GD_9}\\
    &= \sum_{S\in\mathcal{F}}\gamma(S) \sum_{i\in S} I\left( X_i;X_{[1:i-1] \cap S^{\mathrm{c}} \setminus{(<S)}} | X_{([1:i-1] \cap S) \cup (<S)} \right),
\end{align}
where \eqref{eq: GD_7} follows from the chain rule of entropy, \eqref{eq: GD_8} follows because $\sum\limits_{\substack{S \in \mathcal{F}:\\ i \in S}} \gamma(S) = 1$, and \eqref{eq: GD_9} follows by interchanging the summations.

\emph{Proof of Part 2).}
\begin{align}
    &\tbinom{n-1}{k-1}\operatorname{Gap}(\mathcal{C}_k,\gamma) \nonumber \\
    & = \sum\limits_{\substack{S \subseteq [1:n]:\\|S| = k}} H(X_S|X_{<S}) - \tbinom{n-1}{k-1}H(X_{[1:n]}) \\
    & = \sum\limits_{j = 1}^{n-k+1} \sum\limits_{A \in \mathcal{F}_j} H(X_j, X_A|X_{[1:j-1]}) - \tbinom{n-1}{k-1}H(X_{[1:n]}) \label{eq: GD_1} \\
    & = H(X_{[n-k+1:n]}|X_{[1:n-k]}) \nonumber \\
    &~+ \sum\limits_{j = 1}^{n-k} \sum\limits_{A \in \mathcal{F}_j} H(X_j, X_A|X_{[1:j-1]}) - \tbinom{n-1}{k-1}H(X_{[1:n]}) \\
    & = H(X_{[n-k+1:n]}|X_{[1:n-k]}) \nonumber \\
    &~+ \sum\limits_{j = 1}^{n-k} \sum\limits_{A \in \mathcal{F}_j} \left( H(X_j|X_{[1:j-1]}) + H( X_A|X_{[1:j]}) \right) \nonumber \\
    &~- \tbinom{n-1}{k-1}H(X_{[1:n]}) \label{eq: GD_2} \\
    & = H(X_{[n-k+1:n]}|X_{[1:n-k]}) \nonumber \\
    &~+ \sum\limits_{j = 1}^{n-k} \left( \tbinom{n-j}{k-1} H(X_j|X_{[1:j-1]}) + \tbinom{n-j-1}{k-2} H(X_{[j+1:n]}|X_{[1:j]})\right) \nonumber \\
    &~+ \sum\limits_{j = 1}^{n-k} \left( \sum\limits_{A \in \mathcal{F}_j} H( X_A|X_{[1:j]}) - \tbinom{n-j-1}{k-2} H(X_{[j+1:n]}|X_{[1:j]})\right) \nonumber \\
    &~- \tbinom{n-1}{k-1}H(X_{[1:n]}) \\
    & = \sum\limits_{j = 1}^{n-k} \frac{\binom{n-j-1}{k-2}}{\binom{n-1}{k-1}}(\mathcal{F}_j,\gamma_j)\emph{-MI}(X_{j+1}; \ldots; X_n \mid X_{[1:j]}) \label{eq: GD_3},
\end{align}
where \eqref{eq: GD_1} follows because every $k$-sized subset of $[1:n]$ whose minimum element is $j \in [1:n]$ can be uniquely expressed as $\{j\} \cup A$ for some $A \in \mathcal{F}_j$, \eqref{eq: GD_2} follows from chain rule for entropy, and \eqref{eq: GD_3} follows because
\begin{align}
    &\sum\limits_{j = 1}^{n-k} \left( \tbinom{n-j}{k-1} H(X_j|X_{[1:j-1]}) + \tbinom{n-j-1}{k-2} H(X_{[j+1:n]}|X_{[1:j]})\right) \\
    & = \sum\limits_{j = 1}^{n-k} \tbinom{n-j}{k-1} H(X_j|X_{[1:j-1]}) \nonumber \\
    & \quad + \sum\limits_{j = 1}^{n-k} \left(\tbinom{n-j}{k-1}- \tbinom{n-j-1}{k-1}\right) H(X_{[j+1:n]}|X_{[1:j]}) \label{eq: GD_4}\\
    & = \sum\limits_{j = 1}^{n-k} \tbinom{n-j}{k-1} H(X_{[j:n]}|X_{[1:j-1]}) \nonumber \\
    & \quad - \sum\limits_{j = 1}^{n-k} \tbinom{n-j-1}{k-1} H(X_{[j+1:n]}|X_{[1:j]}) \label{eq: GD_5} \\
    & = \tbinom{n-1}{k-1}H(X_{[1:n]}) - H(X_{[n-k+1:n]}|X_{[1:n-k]}) \label{eq: GD_6},
\end{align}
where \eqref{eq: GD_4} follows from Pascal's rule, \eqref{eq: GD_5} follows by chain rule for entropy, and \eqref{eq: GD_6} follows because the expression in \eqref{eq: GD_5} is a telescoping sum.

\section{Proof of Corollary~\ref{Strong_Han_Gap_Stability}}\label{appendix:b}
Using Theorem~\ref{Gap_Decomposition}, we get that
\begin{align}\label{eq: SHGS_5}
    &\operatorname{Gap}(\mathcal{C}_k,\gamma) \nonumber \\
    &= \sum\limits_{j = 1}^{n-k} \frac{\binom{n-j-1}{k-2}}{\binom{n-1}{k-1}}(\mathcal{F}_j,\gamma_j)\text{-MI}(X_{j+1}; \ldots; X_n \mid X_{[1:j]}),
\end{align}
where $\mathcal{F}_j$, for $j \in [1:n-k]$, denotes the family of all $(k-1)$-sized subsets of $[j+1:n]$ and $\gamma_j$ denotes the fractional partition with respect to $\mathcal{F}_j$ such that $\gamma_j(S) = \frac{1}{\binom{n-j-1}{k-2}},$ for all $S \in \mathcal{F}_j$. Putting together the given condition that $\operatorname{Gap}(\mathcal{C}_k,\gamma) \leq \varepsilon$ and \eqref{eq: SHGS_5} gives
\begin{align}\label{eq: SHGS_6}
    (\mathcal{F}_1,\gamma_1)\text{-MI}(X_2; \ldots; X_n \mid X_1) &\leq \frac{\binom{n-1}{k-1}}{\binom{n-2}{k-2}} \varepsilon\nonumber \\
    & = \frac{n-1}{k-1}\varepsilon,
\end{align}
since each summand in \eqref{eq: SHGS_5} is non-negative by Theorem~\ref{Conditional_FGamma_MI}. This completes the proof of Corollary~\ref{Strong_Han_Gap_Stability}.

The reader may assume correctly that there are upper bounds for each summand in \eqref{eq: SHGS_5}. However, this information is subsumed in \eqref{eq: SHGS_6}. 
This is because for any $j \in [1:n-k-1]$, $\binom{n-j-1}{k-2}(\mathcal{F}_j,\gamma_j)\text{-MI}(X_{j+1}; \ldots; X_n \mid X_{[1:j]}) \geq \binom{n-j-2}{k-2}(\mathcal{F}_{j+1},\gamma_{j+1})\text{-MI}(X_{j+2}; \ldots; X_n \mid X_{[1:j+1]})$, as shown below.
\begin{align}
    & (\mathcal{F}_j,\gamma_j)\text{-MI}(X_{j+1}; \ldots; X_n \mid X_{[1:j]}) \nonumber \\
    & = \sum\limits_{S \in \mathcal{F}_j} \frac{1}{\binom{n-j-1}{k-2}} H(X_S|X_{[1:j]}) - H(X_{[j+1:n]}|X_{[1:j]}) \\
    & = \sum\limits_{\substack{S \in \mathcal{F}_j: \\ j+1 \in S}} \frac{1}{\binom{n-j-1}{k-2}} H(X_{j+1}, X_{S \setminus{\{j+1\}}}|X_{[1:j]}) \nonumber \\
    & \quad + \sum\limits_{\substack{S \in \mathcal{F}_j: \\ j+1 \notin S}} \frac{1}{\binom{n-j-1}{k-2}} H(X_S|X_{[1:j]}) - H(X_{[j+1:n]}|X_{[1:j]}) \\
    & = \sum\limits_{\substack{S \in \mathcal{F}_j: \\ j+1 \in S}} \frac{1}{\binom{n-j-1}{k-2}} H(X_{j+1}|X_{[1:j]}) \nonumber \\
    & \quad + \sum\limits_{\substack{S \in \mathcal{F}_j: \\ j+1 \in S}} \frac{1}{\binom{n-j-1}{k-2}} H(X_{S \setminus{\{j+1\}}}|X_{[1:j+1]}) \nonumber \\
    & \quad + \sum\limits_{\substack{S \in \mathcal{F}_j: \\ j+1 \notin S}} \frac{1}{\binom{n-j-1}{k-2}} H(X_S|X_{[1:j]}) \nonumber \\
    & \quad - H(X_{j+1}|X_{[1:j]}) - H(X_{[j+2:n]}|X_{[1:j+1]}) \label{eq: SHGS_1} \\
    & = \sum\limits_{\substack{S \in \mathcal{F}_j: \\ j+1 \in S}} \frac{1}{\binom{n-j-1}{k-2}} H(X_{S \setminus{\{j+1\}}}|X_{[1:j+1]}) \nonumber \\
    & \quad + \sum\limits_{\substack{S \in \mathcal{F}_j: \\ j+1 \notin S}} \frac{1}{\binom{n-j-1}{k-2}} H(X_S|X_{[1:j]}) \nonumber \\
    & \quad - H(X_{[j+2:n]}|X_{[1:j+1]}) \label{eq: SHGS_2} \\
    & = \sum\limits_{\substack{S \in \mathcal{F}_j: \\ j+1 \in S}} \frac{1}{\binom{n-j-1}{k-2}} H(X_{S \setminus{\{j+1\}}}|X_{[1:j+1]}) \nonumber \\
    & \quad - \left( 1 - \frac{\binom{n-j-2}{k-2}}{\binom{n-j-1}{k-2}} \right) H(X_{[j+2:n]}|X_{[1:j+1]}) \nonumber \\
    & \quad + \sum\limits_{\substack{S \in \mathcal{F}_j: \\ j+1 \notin S}} \frac{1}{\binom{n-j-1}{k-2}} H(X_S|X_{[1:j]}) \nonumber \\
    & \quad - \frac{\binom{n-j-2}{k-2}}{\binom{n-j-1}{k-2}}H(X_{[j+2:n]}|X_{[1:j+1]}) \\
    & = \left( 1 - \frac{\binom{n-j-2}{k-2}}{\binom{n-j-1}{k-2}} \right) (\mathcal{F}_j',\gamma_j')\text{-MI}(X_{j+2}; \ldots; X_n \mid X_{[1:j+1]}) \nonumber \\
    & \quad + \frac{\binom{n-j-2}{k-2}}{\binom{n-j-1}{k-2}} (\mathcal{F}_{j+1},\gamma_{j+1})\text{-MI}(X_{j+2}; \ldots; X_n \mid X_{[1:j+1]}), \label{eq: SHGS_3}
\end{align}
where \eqref{eq: SHGS_1} follows from chain rule of entropy, \eqref{eq: SHGS_2} follows because there are $\binom{n-j-1}{k-2}$ many subsets of $[j+1:n]$ which contain $j+1$, and \eqref{eq: SHGS_3} follows as for the family $\mathcal{F}_j' = \{S \in \mathcal{F}_j: j+1 \in S\}$ and fractional partition $\gamma_j'$ such that $\gamma_j'(S) = \frac{1}{\binom{n-j-1}{k-2}-\binom{n-j-2}{k-2}},$ for all $S \in \mathcal{F}_j'$, $\sum\limits_{\substack{S \in \mathcal{F}_j: \\ j+1 \in S}} \frac{1}{\binom{n-j-1}{k-2}-\binom{n-j-2}{k-2}} H(X_S|X_{[1:j+1]}) - H(X_{[j+2:n]}|X_{[1:j+1]}) = (\mathcal{F}_j',\gamma_j')\text{-MI}(X_{j+2}; \ldots; X_n ~|~ X_{[1:j+1]})$. Since $(\mathcal{F}_j',\gamma_j')\text{-MI}(X_{j+2}; \ldots; X_n \mid X_{[1:j+1]}) \geq 0$ by Theorem~\ref{Conditional_FGamma_MI}, \eqref{eq: SHGS_3} implies that for any $j \in [1:n-k-1]$, 
\begin{align}\label{eq: SHGS_4}
    &(\mathcal{F}_j,\gamma_j)\text{-MI}(X_{j+1}; \ldots; X_n \mid X_{[1:j]}) \nonumber \\
    &\geq \frac{\binom{n-j-2}{k-2}}{\binom{n-j-1}{k-2}} (\mathcal{F}_{j+1},\gamma_{j+1})\text{-MI}(X_{j+2}; \ldots; X_n \mid X_{[1:j+1]}).
\end{align}
Instantiating \eqref{eq: SHGS_4} with $j = 1$ gives
\begin{align}
    &(\mathcal{F}_1,\gamma_1)\text{-MI}(X_2; \ldots; X_n \mid X_1) \nonumber \\
    &\geq \frac{\binom{n-3}{k-2}}{\binom{n-2}{k-2}} (\mathcal{F}_2,\gamma_2)\text{-MI}(X_3; \ldots; X_n \mid X_{[1:2]})
\end{align}
and combining it with \eqref{eq: SHGS_6} further gives
\begin{align}
    (\mathcal{F}_2,\gamma_2)\text{-MI}(X_3; \ldots; X_n \mid X_{[1:2]}) \leq \frac{\binom{n-1}{k-1}}{\binom{n-3}{k-2}} \varepsilon,
\end{align}
which is what we would have derived using \eqref{eq: SHGS_5} also by lower bounding every other term by $0$ except for $(\mathcal{F}_2,\gamma_2)\text{-MI}(X_3; \ldots; X_n \mid X_{[1:2]})$. Thus, using \eqref{eq: SHGS_6} and the monotonicity of \eqref{eq: SHGS_4} successively gives an upper bound on $(\mathcal{F}_j,\gamma_j)\text{-MI}(X_{j+1}; \ldots; X_n \mid X_{[1:j]})$, for any $j \in [2:n-k-1]$.

\section{Proof of Corollary~\ref{Strong_Han_Gap_Equality}}\label{appendix:c}
Let $\operatorname{Gap}(\mathcal{C}_k,\gamma)$ be equal to $0$. Substituting $\varepsilon = 0$ in Corollary~\ref{Strong_Han_Gap_Stability}, we get
\begin{align}
    (\mathcal{F}_1,\gamma_1)\text{-MI}(X_2; \ldots; X_n \mid X_1) = 0,
\end{align}
which further holds if $X_2, \ldots, X_n$ are conditionally independent given $X_1$ by Theorem~\ref{Conditional_FGamma_MI}.

For the other direction, suppose that $X_2, \ldots, X_n$ are conditionally independent given $X_1$. This gives that $\text{TC}(X_2, \ldots; X_n | X_1) = 0$, i.e.
\begin{align}
    0 & = \text{TC}(X_2, \ldots; X_n | X_1) \\
    & = \sum\limits_{i = 2}^n H(X_i|X_1) - H(X_{[2:n]}|X_1)
\end{align}
which further implies that for any $j \in [2:n]$,
\begin{align}
    0 & = \sum\limits_{i = 2}^j H(X_i|X_1) + \sum\limits_{i = j+1}^n H(X_i|X_1) - H(X_{[2:j]}|X_1) \nonumber \\
    & \quad\quad - H(X_{[j+1:n]}|X_{[1:j]}) \label{eq: SHGE_3} \\
    & = \sum\limits_{i = j+1}^n H(X_i|X_1) - H(X_{[j+1:n]}|X_{[1:j]}) \label{eq: SHGE_1} \\
    & \geq \sum\limits_{i = j+1}^n H(X_i|X_{[1:j]}) - H(X_{[j+1:n]}|X_{[1:j]}), \label{eq: SHGE_2}
\end{align}
where \eqref{eq: SHGE_3} follows from chain rule of entropy, \eqref{eq: SHGE_1} holds because $X_2, \ldots, X_j$ being conditionally independent given $X_1$ implies that $\sum\limits_{i = 2}^j H(X_i|X_1) = H(X_{[2:j]}|X_1)$, and \eqref{eq: SHGE_2} follows as conditioning reduces entropy. Since by subadditivity of entropy, $H(X_{[j+1:n]}|X_{[1:j]}) \leq \sum\limits_{i = j+1}^n H(X_i|X_{[1:j]})$, \eqref{eq: SHGE_2} implies that for any $j \in [2:n]$,
\begin{align}
    H(X_{[j+1:n]}|X_{[1:j]}) = \sum\limits_{i = j+1}^n H(X_i|X_{[1:j]})
\end{align}
which holds if and only if the random variables $X_{j+1}, \ldots, X_n$ are conditionally independent given $X_{[1:j]}$. From Theorem~\ref{Conditional_FGamma_MI}, we have that for any $j \in [1:n-k-1]$, $X_{j+1}, \ldots, X_n$ are conditionally independent given $X_{[1:j]}$ if $(\mathcal{F}_j,\gamma_j)\text{-MI}(X_{j+1}; \ldots; X_n \mid X_{[1:j]}) = 0$. Thus, we get that $\operatorname{Gap}(\mathcal{C}_k,\gamma) = 0$ by Theorem~\ref{Gap_Decomposition}.

\section{Proof of Theorem~\ref{StrongLW-ddim_Equality}}\label{appendix:d}
To establish the conditions for equality \eqref{eq:StrongLW_equality}, we first present an information-theoretic proof of \eqref{eq:StrongLW-ddim} using \eqref{eq: MT_UB_Strong}. We choose $(X_1, \ldots, X_d)$ uniformly at random from $S$. We then apply \eqref{eq: MT_UB_Strong} to $\mathcal{F} = \{\{[1:d]\setminus{\{i\}}: i \in [1:d]\}$ with the fractional partition $\gamma(S) = \frac{1}{d-1}$ for all $S \in \mathcal{F}$, to obtain an upper bound on $H(X_1, \ldots, X_d)$ as follows.
\begin{align}
    &H(X_1, \ldots, X_d) \nonumber \\
    & \leq \sum\limits_{S \in \mathcal{F}} \gamma(S)H(X_S|X_{< S}) \label{eq: SLWD_5}\\
    & = \frac{1}{d-1} H(X_{[2:d]}|X_1) + \frac{1}{d-1} \sum\limits_{i = 2}^d H(X_{[1:d]\setminus{\{i\}}})\\
    & \leq \frac{1}{d-1} H(X_{[2:d]}|X_1) + \frac{1}{d-1} \sum\limits_{i = 2}^d \log{n_{-i}} \label{eq: SLWD_1}\\
    & \leq \frac{1}{d-1} \Big(\log{n^{\text{max}}_{-1}} + \sum\limits_{i = 2}^d \log{n_{-i}} \Big), \label{eq: SLWD_3}
\end{align}
where \eqref{eq: SLWD_1} follows from the definition of $n_{-i},~i \in [2:d]$, together with the fact that $H(W) \leq \log{|\mathcal{W}|}$, where $\mathcal{W}$ denotes the alphabet of $W$. \eqref{eq: SLWD_3} follows because
\begin{align}
    &H(X_{[2:d]}|X_1) \nonumber \\
    & = \sum\limits_{x} P_{X_1}(x)H(X_{[2:d]}|X_1 = x) \\
    & = \sum\limits_{x} P_{X_1}(x) \log{|\{(x_2, \ldots, x_d):(x,x_2, \ldots, x_d)\in S\}|} \label{eq: SLWD_4prime}\\
    & \leq \sum\limits_{x} P_{X_1}(x) \log{n^{\text{max}}_{-1}} \label{eq: SLWD_4}\\
    & = \log{n^{\text{max}}_{-1}},
\end{align}
where \eqref{eq: SLWD_4prime} holds since, conditioned on $X_1=x$, $(X_2,\dots,X_d)$ is uniform and its alphabet is given by 
\begin{align}\{(x_2, \ldots, x_d):(x,x_2, \ldots, x_d)\in S\},
\end{align}
 and \eqref{eq: SLWD_4} follows from the definition of $n^{\text{max}}_{-1}$. Combining \eqref{eq: SLWD_3} with the fact that $H(X_1,\ldots,X_d) = \log{n}$, we get
\begin{align}\label{eqn:thm51}
    n^{d-1}\leq \left(\prod_{i=2}^{d}n_{-i}\right)n^*_{-1|x_1}.
\end{align}

Clearly, equality in \eqref{eqn:thm51} holds if and only if equality holds in each of \eqref{eq: SLWD_5}, \eqref{eq: SLWD_1} and \eqref{eq: SLWD_3}. By Corollary~\ref{Strong_Han_Gap_Equality}, the equality in \eqref{eq: SLWD_5} implies that $X_2, \ldots, X_d$ are conditionally independent given $X_1$, i.e., for all $x_{[1:d]}$ such that $ P_{X_{1}}(x_1) > 0$,
\begin{align}\label{eq: SLWD_21}
    P_{X_{[2:d]}|X_1}(x_{[2:d]}|x_1) = \prod\limits_{i = 2}^d P_{X_i|X_1}(x_i|x_1). 
\end{align}
For each $x_1\in\mathcal{X}_1$ and $i\in[2:d]$, define
\begin{align}
    \mathcal{X}_{x_1}^{(i)}:=\text{supp}(X_i|X_1=x_1).
\end{align}
Since the conditional distribution of $X_{[2:d]}$ given $X_1=x_1$ is supported on $S_{x_1}$, the conditional independence in \eqref{eq: SLWD_21} gives 
\begin{align}\label{eq: SLWD_9}
    S_{x_1} =  \prod_{i=2}^{d} \mathcal{X}^{(i)}_{x_1}.
\end{align}
Define 
\begin{align}
\mathcal{X}_1=\{x_1\in\mathbb{R}: S_{x_1} \neq \emptyset\}.
\end{align}

Equality in \eqref{eq: SLWD_1} implies that, for every $i\in[2:d]$,
\begin{align}
    H(X_{[1:d]\setminus \{i\}})=\log{n_{-i}}.
\end{align}
Thus, $X_{[1:d]\setminus \{i\}}$ will be uniformly distributed on its support. For any point $x_{[1:d]\setminus\{i\}}$ in this support whose first coordinate is fixed $x_1$, we have
\begin{align}
    P_{X_{[1:d]} \setminus{\{i\}}}(x_1,x_{{[2:d]} \setminus{\{i\}}}) & = \sum_{x_i} P_{X_{[1:d]}}(x_{[1:d]}) \\
    & = \frac{|\mathcal{X}^{(i)}_{x_1}|}{n}, \label{eq: SLWD_8}
\end{align}
where \eqref{eq: SLWD_8} follows from the fact that the set $\mathcal{X}^{(i)}_{x_1}$ is exactly the set $\{x_i: (x_1,x_2, \ldots, x_d) \in S\}$ (due to \eqref{eq: SLWD_9}) and that $X_{[1:d]}$ is uniform. Since $P_{X_{[1:d]}\setminus\{i\}}$ is uniform, \eqref{eq: SLWD_8} implies that 
\begin{align}\label{eqn:eqcard}
|\mathcal{X}^{(i)}_{x}| = |\mathcal{X}^{(i)}_{x_1'}|,
\end{align}
for every $i\in[2:d]$, and all $x_1,x_1'\in\mathcal{X}_1$.

Moreover, \eqref{eq: SLWD_9} and \eqref{eqn:eqcard} imply that
\begin{align}
    |S_{x_1}| =  \prod_{i=2}^{d} |\mathcal{X}^{(i)}_{x_1}|
\end{align}
is independent of $x_1\in\mathcal{X}_1$. Consequently, every nonempty slice has cardinality $n^{\text{max}}_{-1}$. Since $X_{[2:d]}$ conditioned on $X_1=x_1$ is uniformly distributed on $S_{x_1}$, 
\begin{align}
    H(X_{[2:d]}|X_1)=\sum_{x_1\in\mathcal{X}_1}P_{X_1}(x_1)\log{|S_{x_1}|}=\log{n^{\text{max}}_{-1}}.
\end{align}

Thus, equality in \eqref{eq: SLWD_3} follows from equality in \eqref{eq: SLWD_5} and \eqref{eq: SLWD_1}, and therefore imposes no additional condition.

For the other direction, suppose that $S_{x_1} =  \prod_{i=2}^{d} \mathcal{X}^{(i)}_{x_1},$ for all $x_1 \in \mathcal{X}_{1}$ and for distinct $x_1,x_1' \in \mathcal{X}_{1}$, $|\mathcal{X}^{(i)}_{x_1}| = |\mathcal{X}^{(i)}_{x_1'}|$ for all $i\in[2:d]$. We choose $X_{[1:d]}$ uniformly at random from $S$. Conditioned on $X_1=x_1$, $X_{[2:d]}$ is uniformly distributed on $S_{x_1} =  \prod_{i=2}^{d} \mathcal{X}^{(i)}_{x_1}$. Consequently, $X_2,\dots,X_d$ are conditionally independent given $X_1$. Therefore, by Corollary~\ref{Strong_Han_Gap_Equality}, equality holds in \eqref{eq: SLWD_5}.

For each $i\in[2:d]$, for any $x_{[1:d]\setminus\{i\}}$ in the support of $X_{[1:d]\setminus\{i\}}$ whose first coordinate is fixed $x_1$, \eqref{eq: SLWD_8} holds. The probability in \eqref{eq: SLWD_8} is constant over the support since $|\mathcal{X}_{x_1}^{(i)}|$ is independent of $x_1$. Hence $H(X_{[1:d]\setminus\{i\}})=\log n_{-i}$ for $i\in[2:d]$. So, equality holds in \eqref{eq: SLWD_1}.

Finally, equality in \eqref{eq: SLWD_3} follows from equality in \eqref{eq: SLWD_5} and \eqref{eq: SLWD_1} as mentioned earlier. Therefore, combining the equalities in \eqref{eq: SLWD_5}, \eqref{eq: SLWD_1} and \eqref{eq: SLWD_3}, we obtain 
\begin{align}\label{eq: SLWD_22}
    n^{d-1} = \left(\prod_{i=2}^{d}n_{-i}\right)n^{\text{max}}_{-1}.
\end{align}
This completes the proof of Theorem~\ref{StrongLW-ddim_Equality}.

\section{Proof of Theorem~\ref{StrongVolLW_Equality}}\label{appendix:e}
To establish the conditions for equality in \eqref{eq:StrongVolLW}, we first present an information-theoretic proof of \eqref{eq:StrongVolLW} using differential entropy version of \eqref{eq: MT_UB_Strong}. Let
\begin{align}
    V:&=V_d(K),\\
    m(x_1):&=V_{d-1}(K_{x_1}),\\
    T:&=\{x_1\in\mathbb{R}:m(x_1)>0\}.
\end{align}
For $i\in[2:d]$ and $x_{-i}=x_{[1:d]\setminus \{i\}}\in\mathbb{R}^{d-1}$, define the $i^{\text{th}}$-coordinate fiber 
\begin{align}
    K_{x_{-i}}^{(i)}:=\{t\in\mathbb{R}:(x_1,\dots,x_{i-1},t,x_{i+1},\dots,x_d)\in K\},
\end{align}
and let 
\begin{align}
    L_i(x_{-i}):=V_1(K_{x_{-i}}^{(i)}).
\end{align}
Choose $(X_1, \ldots, X_d)$ uniformly at random from $K$. Thus,
\begin{align}
    f_{x_{[1:d]}}(X_{[1:d]})=\frac{\mathbbm{1}_K(x_{[1:d]})}{V},
\end{align}
where $\mathbbm{1}_A(u)$ equals $1$ if $u\in A$ and $0$ otherwise. This gives
\begin{align}
    f_{X_1}(x_1)=\frac{m(x_1)}{V},
\end{align}
and for every $x_1\in T$, 
\begin{align}\label{eqn:uniformcond}
    f_{X_{[2:d]}|X_1}(x_{[2:d]}|x_1)=\frac{\mathbbm{1}_{K_{x_1}}(x_{[2:d]})}{m(x_1)}.
\end{align}
Therefore, conditioned on $X_1=x_1$, $X_{[2:d]}$ is uniformly distributed on $K_{x_1}$. 

Similarly, with $X_{-i}=X_{[1:d]\setminus \{i\}}$, we have
\begin{align}\label{eqn:body6}
    f_{X_{-i}}(x_{-i})=\frac{L_i(x_{-i})}{V}, i\in[2:d],
\end{align}
for $x_{-i}\in \mathbb{R}^{d-1}$.

We apply the differential entropy version of \eqref{eq: MT_UB_Strong} to $\mathcal{F} = \{[1:d]\setminus{\{i\}}: i \in [1:d]\}$ with the fractional partition $\gamma(S)=\frac{1}{d-1}$ for all $S\in \mathcal{F}$ to obtain an upper bound on the differential entropy $h(X_1, \ldots, X_d)$ as follows.
\begin{align}
    &h(X_1, \ldots, X_d) \nonumber \\
    & \leq \sum\limits_{S \in \mathcal{F}} \gamma(S)h(X_S|X_{< S}) \label{eq: SVLWD_5}\\
    & = \frac{1}{d-1} h(X_{[2:d]}|X_1) + \frac{1}{d-1} \sum\limits_{i = 2}^d h(X_{[1:d]\setminus{\{i\}}})\\
    & \leq \frac{1}{d-1} h(X_{[2:d]}|X_1) + \frac{1}{d-1} \sum\limits_{i = 2}^d \log{V_{d-1}(K_{-i})} \label{eq: SVLWD_1}\\
      & \leq \frac{1}{d-1} \Big(\log{M_{-1}(K)} + \sum\limits_{i = 2}^d \log{V_{d-1}(K_{-i})} \Big) \label{eq: SVLWD_3}.
\end{align}
Here, \eqref{eq: SVLWD_1} follows from 
\begin{align}
    h(X_{-i})\leq \log{V_{d-1}(K_{-i})},\ i\in[2:d].
\end{align}
\eqref{eq: SVLWD_3} follows because of the following reason. By the definition of the essential supremum, there exists a Lebesgue null set $N\subseteq \mathbb{R}$ such that 
\begin{align}\label{eqn:esssup1}
    m(x_1)\leq M_{-1}(K),\  \text{for every}\ x_1\in\mathbb{R}\setminus N.
\end{align}
Moreover, since $f_{X_1}(x_1)=\frac{m(x_1)}{V}$, we have
\begin{align}
    \mathbb{P}(X_1\in N)=\frac{1}{V}\int_N m(x_1)\,dx_1=0.
\end{align}
Consequently,
\begin{align}
    h(X_{[2:d]}|X_1) & = \int\limits_{x \in \mathbb{R}} f_{X_1}(x_1) h(X_{[2:d]}|X_1 = x_1)\ dx_1\\
    &=\int_{T\setminus N}\frac{m(x_1)}{V}\log m(x_1)\,dx_1\label{eqn:body1}\\
    &\leq \log{M_{-1}(K)}\int_{T\setminus N}\frac{m(x_1)}{V}\, dx_1\label{eqn:body2}\\
    &=\log{M_{-1}(K)},
\end{align}
where \eqref{eqn:body1} follows from \eqref{eqn:uniformcond} and \eqref{eqn:body2} follows from \eqref{eqn:esssup1}. This proves \eqref{eq: SVLWD_3}. 

Combining \eqref{eq: SVLWD_3} with the fact that $h(X_{[1:d]})=\log{V}$, 
we get
\begin{align}\label{eqn:body3}
    V_d(K)^{d-1}\leq \left(\prod_{i=2}^{d}V_{d-1}(K_{-i})\right) M_{-1}(K).
\end{align}
Clearly,  equality in \eqref{eqn:body3} holds if and only if equality holds in each of \eqref{eq: SVLWD_5}, \eqref{eq: SVLWD_1}, and \eqref{eq: SVLWD_3}.  

By Corollary~\ref{Strong_Han_Gap_Equality}, the equality in \eqref{eq: SVLWD_5} implies that $X_2, \ldots, X_d$ are conditionally independent given $X_1$. Thus, for Lebesgue-almost every $x_1\in T$,
\begin{align}\label{eqn:body4}
    f_{X_{[2:d]}|X_1}(x_{[2:d]}|x_1)=\prod_{i=2}^df_{X_i|X_1}(x_i|x_1).
\end{align}
for Lebesgue-almost every $x_{-i}\in\mathbb{R}^{d-1}$. 

For such an $x_1$, define 
\begin{align}\label{eqn:body5}
    \mathcal{A}_{x_1}^{(i)}:=\{x_i\in\mathbb{R}:f_{X_i|X_1}(x_i|x_1)>0\},\ i\in[2:d].
\end{align}
Now, from \eqref{eqn:uniformcond} and \eqref{eqn:body5}, we have
\begin{align}
    f_{X_{[2:d]}|X_1}(x_{[2:d]}|x_1)>0 \Leftrightarrow x_{[2:d]}\in K_{x_1}
\end{align}
and 
\begin{align}
    \prod_{i=2}^df_{X_i|X_1}(x_i|x_1)>0 \Leftrightarrow x_{2:d}\in\prod_{i=2}^d \mathcal{A}_{x_1}^{(i)},
\end{align}
respectively.

So, \eqref{eqn:body4} implies that 
\begin{align}\label{eqn:body7}
V_{d-1}\left( K_{x_1} \triangle \prod_{i = 2}^d \mathcal{A}^{(i)}_{x_1} \right) = 0.
\end{align}

Equality in \eqref{eq: SVLWD_1} implies that 
\begin{align}
    h(X_{-i})=V_{d-1}(K_{-i}), \ i\in[2:d].
\end{align}
Consequently, $X_{-i}$ is uniformly distributed on $K_{-i}$ up to a null set. Therefore,
\begin{align}
    f_{X_{-i}}(x_{-i})=\frac{\mathbbm{1}_{K_{-i}}(x_{-i})}{V_{d-1}(K_{-i})}
\end{align}
for Lebesgue-almost every $x_{-i}\in\mathbb{R}^{d-1}$.

Comparing this with \eqref{eqn:body6} gives
\begin{align}\label{eqn:body81}
    L_{i}(x_{-i})=\frac{V}{V_{d-1}(K_{-i})}:=\ell_i
\end{align}
for Lebesgue-almost every $x_{-i}\in K_{-i}$.

Fix $i\in[2:d]$. For Lebesgue-almost every $x_1\in T$, Fubini's theorem applied to \eqref{eqn:body6} gives
\begin{align}
    L_i(x_1,x_{[2:d]\setminus\{i\}})=V_1(\mathcal{A}_{x_1}^{(i)})
\end{align}
for Lebesgue-almost every 
\begin{align}
    x_{[2:d]\setminus\{i\}}\in\prod_{\substack{j=2\\j\neq i}}^{d}\mathcal{A}_{x_1}^{(j)}.
\end{align}
Comparing this with \eqref{eqn:body81} gives
\begin{align}
    V_1(\mathcal{A}_{x_1}^{(i)})=\ell_i, \ i\in[2:d].
\end{align}
for Lebesgue-almost every $x_1\in T$.

We will show in the other direction of the proof that equality in \eqref{eq: SVLWD_3} follows from equality in \eqref{eq: SVLWD_5} and \eqref{eq: SVLWD_1}.

For the other direction, we first state a consequence of the assumption that $K$ is a body, which will be proved later. 
\begin{claim}\label{claim}
    Suppose there exist constants $\ell_2,\dots,\ell_d>0$ such that, for Lebesgue-almost every $x_1\in\mathbb{R}$ with $V_{d-1}(K_{x_1})>0$, there exist measurable sets  $\mathcal{A}^{(2)}_{x_1},\ldots,\mathcal{A}^{(d)}_{x_1}\subseteq \mathbb{R}$ satisfying
    \begin{align}\label{eqn:body8}
        V_{d-1}\left( K_{x_1} \triangle \prod_{i = 2}^d \mathcal{A}^{(i)}_{x_1} \right) = 0,
    \end{align}
and for each $i\in[2:d]$, 

\begin{align}\label{eqn:body9}
V_1(\mathcal{A}_{x_1}^{(i)})=\ell_i.
\end{align}
Then for every $i\in[2:d]$, 
\begin{align}\label{eqn:claim10}
    V_{d-1}(K_{-i})=\frac{V}{\ell_i}
\end{align}
and 
\begin{align}\label{eqn:claim15}
L_i(x_{-i})=\ell_i
\end{align}
for Lebesgue-almost every $x_{-i}\in K_{-i}$.
\end{claim}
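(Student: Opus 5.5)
The plan is to compute $L_i$ almost everywhere by Fubini from the slice hypothesis, and then use the body assumption to show that the projection $K_{-i}$ carries no extra mass beyond the set where $L_i=\ell_i$. Once $L_i=\ell_i$ a.e.\ on $K_{-i}$ is known, integrating $L_i$ gives both \eqref{eqn:claim10} and \eqref{eqn:claim15} at once.

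\emph{Step 1 (Fubini: $L_i\in\{0,\ell_i\}$ a.e.).} Fix $i\in[2:d]$. For a.e.\ $x_1\in T$, hypothesis \eqref{eqn:body8} and Fubini in the coordinates $x_{[2:d]}$ give the following. For a.e.\ $x_{[2:d]\setminus\{i\}}\in\prod_{j\neq i}\mathcal{A}^{(j)}_{x_1}$, the fiber $K^{(i)}_{x_{-i}}$ coincides with $\mathcal{A}^{(i)}_{x_1}$ up to a $V_1$-null set, so $L_i(x_{-i})=\ell_i$ by \eqref{eqn:body9}. For a.e.\ $x_{[2:d]\setminus\{i\}}$ outside that product, the fiber is $V_1$-null, so $L_i(x_{-i})=0$. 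For $x_1\notin T$, Fubini applied to $m(x_1)=0$ gives $L_i(x_1,\cdot)=0$ a.e. Integrating over $x_1$ yields $L_i(x_{-i})\in\{0,\ell_i\}$ for $V_{d-1}$-a.e.\ $x_{-i}\in\mathbb{R}^{d-1}$.

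\emph{Step 2 (the body assumption).} This is the main obstacle. $K_{-i}$ is the projection, so it may contain points whose fiber is nonempty but $V_1$-null, and a priori such points could form a set of positive $V_{d-1}$-measure. I would rule this out as follows.
\begin{itemize}
\item Let $U$ be the projection of $\operatorname{int}K$ after deleting coordinate $i$. Then $U$ is open, and every $x_{-i}\in U$ has a fiber containing a nondegenerate open interval, so $L_i>0$ on $U$. Combined with Step 1, $L_i=\ell_i$ on a full-measure, hence dense, subset $D\subseteq U$.
\item Since $K=\overline{\operatorname{int}K}$ is compact and projection is continuous, $K_{-i}$ equals the projection of $\overline{\operatorname{int}K}$, which is $\overline{U}$.
\item Because $K$ is compact, $L_i$ is upper semicontinuous. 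Indeed, if $x^{(n)}_{-i}\to x_{-i}$, then $\limsup_n K^{(i)}_{x^{(n)}_{-i}}\subseteq K^{(i)}_{x_{-i}}$, and outer continuity of Lebesgue measure on decreasing compact sets, all inside a bounded interval, gives $\limsup_n L_i(x^{(n)}_{-i})\le L_i(x_{-i})$.
\item Every $x_{-i}\in K_{-i}=\overline{U}=\overline{D}$ is a limit of points of $D$. Upper semicontinuity then gives $L_i(x_{-i})\ge\ell_i$ for every $x_{-i}\in K_{-i}$.
\end{itemize}
Together with Step 1, this forces $L_i=\ell_i$ a.e.\ on $K_{-i}$, which is \eqref{eqn:claim15}.

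\emph{Step 3 (conclusion).} $L_i$ vanishes off $K_{-i}$, so Fubini gives
$V=\int_{\mathbb{R}^{d-1}}L_i(x_{-i})\,dx_{-i}=\ell_i V_{d-1}(K_{-i})$.
This is \eqref{eqn:claim10}, since $\ell_i>0$.

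The delicate point is the upper-semicontinuity and density argument in Step 2. It is exactly where the hypothesis that $K$ is a body (compact and the closure of its interior) enters. Without it, $K_{-i}$ could contain a fat set of degenerate fibers and \eqref{eqn:claim10} would fail.
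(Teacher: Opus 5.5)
Your proof is correct and follows the same strategy as the paper. In both, the decisive step is the everywhere lower bound $L_i\geq\ell_i$ on $K_{-i}$, obtained from the body hypothesis by approximating with interior points plus an upper-semicontinuity argument for fibers of the compact set $K$, and then combined with Fubini on the slices. Your packaging is somewhat leaner: you get the approximating good points directly from the a.e.\ dichotomy $L_i\in\{0,\ell_i\}$ and the fact that $K_{-i}$ is the closure of the projection of $\operatorname{int}K$, rather than the paper's explicit selection of $x_1^{(r)}\in G$ and $z_r\in U_r$, and you finish with a single integration instead of computing $V_{d-1}(E_i)=V_1(T)\prod_{j\neq i}\ell_j$ and using nonnegativity of $L_i-\ell_i$.
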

Now suppose that the conditions for equality in the theorem hold, i.e., the hypothesis of Claim~\ref{claim} holds. By \eqref{eqn:body8} and \eqref{eqn:body9}, for Lebesgue-almost every $x_1\in T$, we have
\begin{align}
    m(x_1)&=V_{d-1}(K_{x_1})\\
    &=V_{d-1}\left(\prod_{i=2}^d\mathcal{A}_{x_1}^{(i)}\right)\label{eqn:body10}\\
    &=\prod_{i=2}^dV_1(\mathcal{A}_{x_1})\label{eq:body11}\\
    &=\prod_{i=2}^d \ell_i\label{eqn:body12},
\end{align}
where \eqref{eqn:body10} follows from \eqref{eqn:body8}, \eqref{eq:body11} follows from the product property of the Lebesgue measure, and \eqref{eqn:body12} follows from \eqref{eqn:body9}. Consequently, 
\begin{align}\label{eqn:body123}
    M_{-1}(K)=\prod_{i=2}^d \ell_i.
\end{align}
Moreover, for Lebesgue-almost every $x_1\in T$, with $(X_1,\ldots,X_d)$ uniformly distributed on $K$, 
\begin{align}
    f_{X_{[2:d]}\mid X_1=x_1}(x_{[2:d]})&=
    \frac{\prod_{i=2}^{d}\mathbbm{1}_{\mathcal{A}_{x_1}^{(i)}}(x_i)}{\prod_{i=2}^{d}\ell_i}\\
    &=\prod_{i=2}^{d}\frac{\mathbbm{1}_{\mathcal{A}_{x_1}^{(i)}}(x_i)}{\ell_i}.
\end{align}
for Lebesgue-almost every $x_{[2:d]}\in\mathbb{R}^{d-1}$. Thus, $X_2,\dots,X_d$ are conditionally independent given $X_1$. Therefore, equality holds in \eqref{eq: SVLWD_5}.

By Claim~\ref{claim} and \eqref{eqn:body6}, for $i\in[2:d]$,
\begin{align}
    f_{X_{-i}}(x_{-i})&=\frac{\ell_i}{V}\mathbbm{1}_{K_{-i}}(x_{-i})\\
    &=\frac{\mathbbm{1}_{K_{-i}}(x_{-i})}{V_{d-1}(K_{-i})}
\end{align}
for Lebesgue-almost every $x_{-i}\in\mathbb{R}^{d-1}$. Hence, $X_{-i}$ is uniformly distributed on $K_{-i}$, and
\begin{align}
    h(X_{-i})=\log{V_{d-1}(K_{-i})}, \ i\in[2:d].
\end{align}
Thus, equality holds in \eqref{eq: SVLWD_1}.

Finally, \eqref{eqn:body123} gives
\begin{align}
    h(X_{[2:d]}|X_1) & = \int_{x \in T} f_{X_1}(x_1) h(X_{[2:d]}|X_1 = x_1)\ dx_1\\
    &=\int_{T}\frac{m(x_1)}{V}\log m(x_1)\,dx_1\\
    &= \log{M_{-1}(K)}\int_{T}\frac{m(x_1)}{V}\, dx_1\\
    &=\log{M_{-1}(K)}.
\end{align}
Thus, equality also holds in \eqref{eq: SVLWD_3}. 

Therefore, combining the equalities in \eqref{eq: SVLWD_5}, \eqref{eq: SVLWD_1}, and \eqref{eq: SVLWD_3}, we obtain  
\begin{align}
        V_d(K)^{d-1} = \left(\prod_{i=2}^{d}V_{d-1}(K_{-i})\right)M_{-1}(K).
\end{align}
It remains to prove Claim~\ref{claim}.
\begin{proof}[Proof of Claim~\ref{claim}]
Fix $i\in[2:d]$. Let $G\subseteq T$ be a full-measure set, i.e., $V_1(T\setminus G)=0$, on which \eqref{eqn:body8} and \eqref{eqn:body9} hold. We first show that 
\begin{align}
L_i(x_{-i})\geq \ell_i,\  x_{-i}\in K_{-i}\label{eqn:claim1}.
\end{align}
Fix $x_{-i}\in K_{-i}$, and $x_i$ such that $x_{[1:d]}\in K$. Since $K$ is a body, we have $K=\overline{\operatorname{int}K}$ and so there exist points $p_r$, $r\in\mathbb{N}$ such that 
\begin{align}
    p_r\rightarrow x_{[1:d]} \ \text{as}\ r\rightarrow \infty.
\end{align}
For every $r$, chose $\rho_r\in(0,\frac{1}{r})$ such that 
\begin{align}
    B(p_r,\rho_r)\subseteq \operatorname{int}K.
\end{align}
Thus, 
\begin{align}\label{eqn:claim2}
    ||p_r-x_{[1:d]}||_2+\rho_r\rightarrow 0.
\end{align}

Write $p_r=(p_{r,1},p_{r,2},\dots,p_{r,d})$, and let $I_r:=(p_{r,1}-\rho_r,p_{r,1}+\rho_r)$ be the projection of $B(p_r,\rho_r)$ onto the first coordinate. For every $t\in I_r$, the slice 
\begin{align}
    B(p_r,\rho_r)_t:=\left\{x_{[2:d]}:(t,x_{[2:d]})\in B(p_r,\rho_r)\right\}
\end{align}
is a $(d-1)$-dimensional open ball of radius $\sqrt{\rho_r^2-(t-p_{r,1})^2}>0$.  Since $B(p_r,\rho_r)\subseteq K$, we have $B(p_r,\rho_r)_t\subseteq K_t$, and hence $V_{d-1}(K_t)>0$. Thus, $I_r\subseteq T$. Since $V_1(T\setminus G)=0$ and $V_1(I_r)=2\rho_r>0$, we have $I_r\cap G\neq\emptyset$. Choose $x_1^{(r)}\in I_r\cap G$ such that the slice $B(p_r,\rho_r)_t$ is nonempty.

For this $x_1^{(r)}$, the slice $B(p_r,\rho_r)_{x_1^{(r)}}$ is a nonempty open subset of $K_{x_1^{(r)}}$. Let $U_r$ denote its projection onto the coordinates indexed by $[2:d]\setminus\{i\}$. Then $U_r$ is a nonempty open subset of $\mathbb{R}^{d-2}$ and hence has positive $(d-2)$-dimensional Lebesgue measure.

For $z=x_{[2:d]\setminus\{i\}}\in\mathbb{R}^{d-2}$, let $F_r(z)$ denote the corresponding $i^{\mathrm{th}}$-coordinate fiber of $K_{x_1^{(r)}}$, i.e., 
\begin{align}
    &F_r(z):= \nonumber \\
    &\left\{u\in\mathbb R:(x_2,\ldots,x_{i-1},u,x_{i+1},\ldots,x_d)\in K_{x_1^{(r)}}\right\},
\end{align}
and let $P_r(z)$ denote the corresponding fiber of $\prod_{j=2}^{d}\mathcal{A}_{x_1^{(r)}}^{(j)}$, i.e.,
\begin{align}
    &P_r(z):= \nonumber \\
    &\left\{u\in\mathbb R:(x_2,\ldots,x_{i-1},u,x_{i+1},\ldots,x_d)\in\prod_{j=2}^{d}\mathcal A_{x_1^{(r)}}^{(j)}\right\}.
\end{align}
Since $x_1^{(r)}\in G$, \eqref{eqn:body8} and Fubini's theorem give 
\begin{align}
    0&=V_{d-1}\left( K_{x_1^{(r)}} \triangle \prod_{j = 2}^d \mathcal{A}^{(j)}_{x_1^{(r)}} \right)\\
    &=\int_{\mathbb{R}^{d-2}}V_1\left(F_r(z)\mathbin{\triangle}P_r(z)\right)\,dz.
\end{align}
Thus, 
\begin{align}
    V_1\left(F_r(z)\mathbin{\triangle}P_r(z)\right)=0
\end{align}
for almost every $z\in U_r$. For every $z\in U_r$, the $i^{\mathrm{th}}$-coordinate fiber of $B(p_r,\rho_r)_{x_1^{(r)}}$ is a nonempty open interval contained in $F_r(z)$. Hence,
\begin{align}
    V_1(F_r(z))>0,
    \qquad z\in U_r.
\end{align}
On the other hand, for Lebesgue-almost every $z\in U_r$, $V_1\left(F_r(z)\mathbin{\triangle}P_r(z)\right)=0$. For any such $z\in U_r$, $P_r(z)$ cannot be empty, since otherwise it would imply that $V_1(F_r(z))=0$. 
Moreover, a fiber of the Cartesian product satisfies
\begin{align}
    P_r(z)=
    \begin{cases}
        \mathcal{A}_{x_1^{(r)}}^{(i)},&z\in\displaystyle\prod_{\substack{j=2\\j\neq i}}^d\mathcal{A}_{x_1^{(r)}}^{(j)},\\[2mm]
        \emptyset,
        &\text{otherwise}.
    \end{cases}
\end{align}
It follows that
\begin{align}
    P_r(z)=\mathcal{A}_{x_1^{(r)}}^{(i)}
\end{align}
for Lebesgue-almost every $z\in U_r$. Since $U_r$ has positive measure, choose one such $z_r\in U_r$. Then
\begin{align}
   V_1( F_r(z_r))&=V_1(P_r(z_r))\\
   &=V_1(\mathcal{A}_{x_1^{(r)}}^{(i)})=\ell_i.
\end{align}
Let $x_{-i}^{(r)}=(x_1^{(r)},z_r)$ be the resulting choice of all coordinates other than the $i^{\mathrm{th}}$ coordinate. Since
$z_r\in U_r$, there exists $u_r\in\mathbb{R}$ such that the point $q_r$ obtained by inserting $u_r$ as its $i^{\mathrm{th}}$ coordinate satisfies 
\begin{align}
    q_r\in B(p_r,\rho_r)\qquad\text{and}\qquad(q_r)_{-i}=x_{-i}^{(r)}.
\end{align}
Therefore,
\begin{align}
    \left\|x_{-i}^{(r)}-(p_r)_{-i}\right\|_2\leq \|q_r-p_r\|_2<\rho_r.
\end{align}
It follows that
\begin{align}
    \left\|x_{-i}^{(r)}-x_{-i}\right\|_2 &\leq\left\|x_{-i}^{(r)}-(p_r)_{-i}\right\|_2+ \left\|(p_r)_{-i}-x_{-i}\right\|_2\\
    &<\rho_r+\left\|p_r-x_{[1:d]}\right\|_2\longrightarrow0,
\end{align}
where the convergence follows from \eqref{eqn:claim2}. Hence,
\begin{align}
    x_{-i}^{(r)}\longrightarrow x_{-i}.
\end{align}
Moreover, directly from the definitions of $F_r(z_r)$ and
$x_{-i}^{(r)}$,
\begin{align}
 F_r(z_r)=K_{x_{-i}^{(r)}}^{(i)}.
\end{align}
Let   $A:=K_{x_{-i}}^{(i)}$.
Since $x_{-i}\in K_{-i}$ and $K$ is compact, $A$ is a nonempty compact subset of $\mathbb{R}$. We claim that, for every $\eta>0$, all sufficiently large $r$ satisfy
\begin{align}\label{eqn:claim6}
    F_r(z_r)\subseteq (K_{x_{-i}^{(r)}}^{(i)})_{\eta},
\end{align}
where
\begin{align}
    (A)_{\eta}:=\{u\in\mathbb{R}:\operatorname{dist}(u,A)<\eta\},
\end{align}
with $\operatorname{dist}(u,A)=\inf_{a\in A}|u-a|$.
Suppose otherwise. Then there exist $\eta>0$, a subsequence $(r_k)$, and points $t_k\in F_{r_k}(z_{r_k})$ such that
\begin{align}\label{eqn:claim4}
    \operatorname{dist}(t_k,A)\geq\eta.
\end{align}
Since
\begin{align}
    F_{r_k}(z_{r_k})=K_{x_{-i}^{(r_k)}}^{(i)},
\end{align}
the point $y_k$ obtained by inserting $t_k$ as the $i^{\mathrm{th}}$ coordinate of $x_{-i}^{(r_k)}$ belongs to $K$. By the compactness of $K$, a further subsequence of $(y_k)$ converges to some $y\in K$. Since $x_{-i}^{(r_k)}\rightarrow x_{-i}$, the limit $y$ is obtained by inserting
some $t\in\mathbb{R}$ as the $i^{\mathrm{th}}$ coordinate of $x_{-i}$. Consequently,
\begin{align}
    t\in K_{x_{-i}}^{(i)}.
\end{align}
Moreover, along the same subsequence, $t_k\rightarrow t$. Hence,
\begin{align}
    \operatorname{dist}(t_k,A)
    \leq |t_k-t|
    \longrightarrow0,
\end{align}
contradicting \eqref{eqn:claim4}. This proves \eqref{eqn:claim6}.

Therefore, for every $\eta>0$, choosing $r$ sufficiently large gives
\begin{align}
    V_1((A)_{\eta})
    \geq V_1(F_r(z_r))
    =\ell_i.
\end{align}
Since $A$ is compact, the sets $A_{1/n}$ decrease to $A$, and $V_1(A_1)<\infty$. By continuity of Lebesgue measure from above,
\begin{align}
    V_1(A)&=\lim_{n\to\infty}V_1(A_{1/n})\\
    &\geq\ell_i.
\end{align}
Thus,
\begin{align}
    L_i(x_{-i})=V_1\left(K_{x_{-i}}^{(i)}\right)\geq\ell_i,
\end{align}
which proves \eqref{eqn:claim1}.

We next compute the measure of the set of points $x_{-i}$ for which the corresponding $i^{\text{th}}$-coordinate fiber of $K$ has positive length.
Define 
\begin{align}
    E_i:=\{x_{-i}\in\mathbb{R}^{d-1}:L_i(x_{-i})>0\}, 
\end{align}
and, for $x_1\in\mathbb{R}$, define 
\begin{align}
    E_{i,x_1}:=\{z\in\mathbb{R}^{d-2}:L_i(x_1,z)>0\}.
\end{align}
For every $x_1\in G$, Fubini's theorem applied to
\eqref{eqn:body8} gives
\begin{align}
    L_i(x_1,z)=\ell_i\mathbbm{1}_{\prod_{\substack{j=2\\j\neq i}}^{d}\mathcal{A}_{x_1}^{(j)}}(z)
\end{align}
for Lebesgue-almost every $z\in\mathbb{R}^{d-2}$.
If $x_1\notin T$, then
\begin{align}
    0 =V_{d-1}(K_{x_1})=\int_{\mathbb{R}^{d-2}}L_i(x_1,z)\,dz.
\end{align}
Therefore, $L_i(x_1,z)=0$ for almost every $z\in\mathbb{R}^{d-2}$, and hence
\begin{align}
    V_{d-2}(E_{i,x_1})=0.
\end{align}
Since $V_1(T\setminus G)=0$, Fubini's theorem now gives
\begin{align}
    V_{d-1}(E_i)&=\int_{\mathbb{R}}V_{d-2}(E_{i,x_1})\,dx_1\\
    &=V_1(T)\prod_{\substack{j=2\\j\neq i}}^{d}\ell_j.\label{eqn:claim7}
\end{align}
Similarly, \eqref{eqn:body8} and \eqref{eqn:body9} give 
\begin{align}
    V&=\int_{\mathbb{R}}V_{d-1}(K_{x_1})\,dx_1\\
    &= V_1(T)\prod_{j=2}^{d}\ell_j.\label{eqn:claim8}
\end{align}
Combining \eqref{eqn:claim7} and \eqref{eqn:claim8}, we obtain
\begin{align}\label{eqn:claim9}
    V_{d-1}(E_i)=\frac{V}{\ell_i}.
\end{align}
By definition, $E_i\subseteq K_{-i}$. Conversely, if $x_{-i}\in K_{-i}$, then \eqref{eqn:claim1} gives
\begin{align}
    L_i(x_{-i})\geq \ell_i>0.
\end{align}
So, $x_{-i}\in E_i$. Therefore,
\begin{align}
    E_i=K_{-i}.
\end{align}
Together with \eqref{eqn:claim9}, this gives
\begin{align}\label{eqn:claim13}
    V_{d-1}(K_{-i})=\frac{V}{\ell_i},
\end{align}
which is \eqref{eqn:claim10}. Finally, by Fubini's theorem,
\begin{align}
    V&=\int_{\mathbb{R}^{d-1}}\int_{\mathbb{R}}\mathbbm{1}_K(x_{-i},u)\,du\,dx_{-i}\\
    &=\int_{K_{-i}}L_i(x_{-i})\,dx_{-i}.
\end{align}
Using \eqref{eqn:claim1} and \eqref{eqn:claim13}, we obtain
\begin{align}
    0&=V-\ell_iV_{d-1}(K_{-i})\\
    &=\int_{K_{-i}}\bigl(L_i(x_{-i})-\ell_i\bigr)\,dx_{-i}.
\end{align}
Since $L_i(x_{-i})-\ell_i\geq0$ on $K_{-i}$, it follows that
\begin{align}
    L_i(x_{-i})=\ell_i
\end{align}
for Lebesgue-almost every $x_{-i}\in K_{-i}$. This proves \eqref{eqn:claim15} and completes the proof of Claim~\ref{claim}.
\end{proof}

\section{Proof of Theorem~\ref{ExtremalGraph}}\label{appendix:f}
\emph{Proof of Part 1).} Let $X^n = (X_1, \ldots, X_n)$ be chosen uniformly at random from $V(G)$. So, $P_{X^n}(x^n) = 1/|V(G)|$, for $x^n \in V(G)$ and $H(X^n) = \log{|V(G)|}$. For any $x^n \in V(G)$ and $S \subseteq [1:n]$, we have either $\Bar{x}^{(S)} \in V(G)$ or $\Bar{x}^{(S)} \notin V(G)$. If $\Bar{x}^{(S)} \in V(G)$, then
\begin{align}
    P_{X_S|\Tilde{X}^{(S)}}&(x_S|\Tilde{x}^{(S)}) \nonumber \\
    & = \frac{1}{|\{y^n \in V(G): \Tilde{y}^{(S)} = \Tilde{x}^{(S)}, \Bar{x}^{(S)} \in V(G)\}|} \\
    & \leq \frac{1}{m_S}, \label{eq: EG_1}
\end{align}
where \eqref{eq: EG_1} holds by definition of $m_S$ in \eqref{eq: m_S}. Similarly, if $\Bar{x}^{(S)} \notin V(G)$, then
\begin{align}\label{eq: EG_2}
    P_{X_S|\Tilde{X}^{(S)}}&(x_S|\Tilde{x}^{(S)}) \leq \frac{1}{\ell_S}
\end{align}
by the definition of $l_S$ in \eqref{eq: l_S}.
Now, let us compute $H(X^n) - H(\Tilde{X}^{(S)})$.
\begin{align}
    & H(X^n) - H(\Tilde{X}^{(S)}) \nonumber \\
    & = H(X_S|\Tilde{X}^{(S)}) \label{eq: EG_3}\\
    & = - \sum\limits_{x^n \in \{-1,1\}^n} P_{X^n}(x^n) \log{\left(P_{X_S|\Tilde{X}^{(S)}}(x_S|\Tilde{x}^{(S)})\right)} \\
    & = - \frac{1}{|V(G)|} \sum\limits_{x^n \in V(G)} \log{\left(P_{X_S|\Tilde{X}^{(S)}}(x_S|\Tilde{x}^{(S)})\right)} \\
    & = - \frac{1}{|V(G)|} \sum\limits_{\substack{x^n \in V(G),\\ \Bar{x}^{(S)} \in V(G)}} \log{\left(P_{X_S|\Tilde{X}^{(S)}}(x_S|\Tilde{x}^{(S)})\right)} \nonumber \\
    & \quad\quad - \frac{1}{|V(G)|} \sum\limits_{\substack{x^n \in V(G),\\ \Bar{x}^{(S)} \notin V(G)}} \log{\left(P_{X_S|\Tilde{X}^{(S)}}(x_S|\Tilde{x}^{(S)})\right)} \\
    & \geq \frac{\log{m_S}}{|V(G)|} \sum\limits_{x^n \in \{-1,1\}^n} \mathbbm{1}\{x^n \in V(G),\Bar{x}^{(S)} \in V(G)\} \nonumber \\
    & \quad \quad + \frac{\log{\ell_S}}{|V(G)|} \sum\limits_{x^n \in \{-1,1\}^n} \mathbbm{1}\{x^n \in V(G),\Bar{x}^{(S)} \notin V(G)\}, \label{eq: EG_4}
\end{align}
where \eqref{eq: EG_3} follows by chain rule, and \eqref{eq: EG_4} follows from \eqref{eq: EG_1} and \eqref{eq: EG_2}. 
We can compute both the summations on the RHS of \eqref{eq: EG_4} in terms of $|V(G)|$, and $|E_S(G)|$, where $E_S(G)$ denotes the set of edges connecting pairs of vertices $x^n,y^n\in V(G)$ such that $\Delta(x^n,y^n) =  S$. The first summation is given by
\begin{align}
    \sum\limits_{x^n \in \{-1,1\}^n} \mathbbm{1}\{x^n \in V(G),\Bar{x}^{(S)} \in V(G)\} = 2|E_S(G)|, \label{eq: EG_7}
\end{align}
because every edge $x^n,\Bar{x}^{(S)} \in E_S(G)$ is counted twice in the above summation. For the second summation, notice that
\begin{align}
    & \sum\limits_{x^n \in \{-1,1\}^n} \mathbbm{1}\{x^n \in V(G),\Bar{x}^{(S)} \in V(G)\} \nonumber \\
    & \quad \quad + \sum\limits_{x^n \in \{-1,1\}^n} \mathbbm{1}\{x^n \in V(G),\Bar{x}^{(S)} \notin V(G)\} \\
    & = \sum\limits_{x^n \in \{-1,1\}^n} \mathbbm{1}\{x^n \in V(G)\} \\
    & = |V(G)|. \label{eq: EG_11}
\end{align}
Now, substituting \eqref{eq: EG_7} in \eqref{eq: EG_11} gives the value of the second summation as
\begin{align}
    &\sum\limits_{x^n \in \{-1,1\}^n} \mathbbm{1}\{x^n \in V(G),\Bar{x}^{(S)} \notin V(G)\} \nonumber \\
    & \quad \quad = |V(G)| - 2|E_S(G)|. \label{eq: EG_8}
\end{align}
Substituting the values of the summations from \eqref{eq: EG_7} and \eqref{eq: EG_8} in \eqref{eq: EG_4} gives
\begin{align}
    &H(X^n) - H(\tilde{X}^{(S)}) \nonumber \\
    & \geq 2 \frac{|E_S(G)|}{|V(G)|} \log{m_S} + \frac{\big(|V(G)| - 2|E_S(G)|\big)}{|V(G)|} \log{\ell_S} \\
    & = \frac{2|E_S(G)|}{|V(G)|} \log{\frac{m_S}{\ell_S}} + \log{\ell_S} \label{eq: EG_9}
\end{align}
Multiplying by $\frac{\alpha(S^{\mathrm{c}})}{w(\alpha)-1}$ and subsequently summing over all $S \in \mathcal{F}$ on both the sides of \eqref{eq: EG_9} then gives
\begin{align}
    &\sum\limits_{S \in \mathcal{F}} \frac{\alpha(S^{\mathrm{c}})}{w(\alpha)-1}\left( H(X^n) - H(\Tilde{X}^{(S)}) \right) \nonumber\\
    & \geq \sum\limits_{S \in \mathcal{F}} \frac{\alpha(S^{\mathrm{c}})}{w(\alpha)-1} \left( \frac{2|E_S(G)|}{|V(G)|} \log{\frac{m_S}{\ell_S}} + \log{\ell_S}\right). \label{eq: EG_13}
\end{align}
Also by Theorem~\ref{FractionalSL}, we have that
\begin{align}
    \sum\limits_{S^{\mathrm{c}} \in \bar{\mathcal{F}}} \alpha(S^{\mathrm{c}})H(\tilde{X}^{(S)}) &\geq H(X^n), \label{eq: EG_12}
\end{align}
Subtracting $w(\alpha)H(X^n)$ on both the sides of \eqref{eq: EG_12} gives 
\begin{align}
    \sum\limits_{S^{\mathrm{c}} \in \bar{\mathcal{F}}} \alpha(S^{\mathrm{c}})\left( H(X^n) - H(\Tilde{X}^{(S)}) \right) &\leq (w(\alpha)-1)H(X^n)
\end{align}
as $w(\alpha) = \sum\limits_{S^{\mathrm{c}} \in \bar{\mathcal{F}}} \alpha(S)$, which further implies
\begin{align}
    \sum\limits_{S \in \mathcal{F}} \frac{\alpha(S^{\mathrm{c}})}{w(\alpha)-1}\left( H(X^n) - H(\Tilde{X}^{(S)}) \right) &\leq H(X^n) \\
    & = \log{|V(G)|}. \label{eq: EG_10}
\end{align}
By putting together \eqref{eq: EG_13} and \eqref{eq: EG_10}, we get
\begin{align}
    \sum\limits_{S \in \mathcal{F}} \frac{\alpha(S^{\mathrm{c}})}{w(\alpha)-1} \left( \frac{2|E_S(G)|}{|V(G)|} \log{\frac{m_S}{\ell_S}} + \log{\ell_S}\right) \leq \log{|V(G)|}. \label{eq: EG_6}
\end{align}

\emph{Proof of Part 2).} It follows from \eqref{eq: EG_6} that
\begin{align}\label{eq: EG_17}
    \left( \min\limits_{S \in \mathcal{F}} \frac{\alpha(S^{\mathrm{c}})}{w(\alpha)-1}\right) \sum\limits_{S \in \mathcal{F}} \left( \frac{2|E_S(G)|}{|V(G)|} \log{\frac{m_S}{\ell_S}} + \log{\ell_S}\right) \nonumber \\
    \leq \log{|V(G)|}.
\end{align}
Let $\mathcal{C}$ be the collection of all possible fractional coverings with respect to the family $\bar{\mathcal{F}}$. Since \eqref{eq: EG_17} is valid for any fractional covering $\alpha$ with respect to $\bar{\mathcal{F}}$, it also holds for the fractional covering $\alpha^* \in \mathcal{C}$ which maximizes $\max\limits_{\alpha' \in \mathcal{C}}\min\limits_{S \in \mathcal{F}} \frac{\alpha'(S^{\mathrm{c}})}{w(\alpha')-1}$, i.e.,
\begin{align}\label{eq: EG_18}
    \left( \max\limits_{\alpha' \in \mathcal{C}}\min\limits_{S \in \mathcal{F}} \frac{\alpha'(S^{\mathrm{c}})}{w(\alpha')-1}\right) \sum\limits_{S \in \mathcal{F}} \left( \frac{2|E_S(G)|}{|V(G)|} \log{\frac{m_S}{\ell_S}} + \log{\ell_S}\right) \nonumber \\
    \leq \log{|V(G)|}.
\end{align}
Further, it can be shown that $\max\limits_{\alpha' \in \mathcal{C}}\min\limits_{S \in \mathcal{F}} \frac{\alpha'(S^{\mathrm{c}})}{w(\alpha')-1} = 1/r(\mathcal{F})$. Consider any $\alpha': \bar{\mathcal{F}} \rightarrow \mathbb{R}_+$. We have that for any $i \in [1:n]$
\begin{align}\label{eq: EG_15}
    \sum\limits_{S \in \mathcal{F}: i \in S} \frac{\alpha'(S^{\mathrm{c}})}{w(\alpha')-1} \geq \left( \min\limits_{S \in \mathcal{F}} \frac{\alpha'(S^{\mathrm{c}})}{w(\alpha')-1} \right) |\{S \in \mathcal{F}: i \in S\}|.
\end{align}
Since, $\alpha'$ is a fractional covering with respect to $\bar{\mathcal{F}}$, i.e., $\sum\limits_{S^{\mathrm{c}} \in \bar{\mathcal{F}}: i \in S^{\mathrm{c}}} \alpha'(S^{\mathrm{c}}) \geq 1$, we have that
\begin{align}
    &\sum\limits_{S \in \mathcal{F}: i \in S} \frac{\alpha'(S^{\mathrm{c}})}{w(\alpha')-1} \nonumber \\
    & = \sum\limits_{S^{\mathrm{c}} \in \bar{\mathcal{F}}: i \notin S^{\mathrm{c}}} \frac{\alpha'(S^{\mathrm{c}})}{w(\alpha')-1} \\
    & = \frac{1}{w(\alpha')-1}\left( \sum\limits_{S^{\mathrm{c}} \in \bar{\mathcal{F}}} \alpha'(S^{\mathrm{c}}) - \sum\limits_{S^{\mathrm{c}} \in \bar{\mathcal{F}}: i \in S^{\mathrm{c}}} \alpha'(S^{\mathrm{c}}) \right) \\
    & \leq \frac{w(\alpha')-1}{w(\alpha')-1} = 1. \label{eq: EG_26}
\end{align}
Then \eqref{eq: EG_26} put together with \eqref{eq: EG_15} implies that for any $i \in [1:n]$
\begin{align}
    \left( \min\limits_{S \in \mathcal{F}} \frac{\alpha'(S^{\mathrm{c}})}{w(\alpha')-1} \right) \leq \frac{1}{|\{S \in \mathcal{F}: i \in S\}|},
\end{align}
which further implies that
\begin{align}
    \left( \min\limits_{S \in \mathcal{F}} \frac{\alpha'(S^{\mathrm{c}})}{w(\alpha')-1} \right) & \leq \min\limits_{i \in [1:n]}\frac{1}{|\{S \in \mathcal{F}: i \in S\}|} \\
    & = \frac{1}{\max\limits_{i \in [1:n]} |\{S \in \mathcal{F}: i \in S\}|} \\
    & = \frac{1}{r(\mathcal{F})}. \label{eq: EG_16}
\end{align}
Since \eqref{eq: EG_16} holds for any $\alpha' \in \mathcal{C}$, it follows that
\begin{align}
    \max\limits_{\alpha' \in \mathcal{C}}\min\limits_{S \in \mathcal{F}} \frac{\alpha'(S^{\mathrm{c}})}{w(\alpha')-1} \leq \frac{1}{r(\mathcal{F})}.
\end{align}
As follows, $\alpha^*(S) = \frac{1}{|F| - r(\mathcal{F})}$ for all $S \in \bar{\mathcal{F}}$, is a valid fractional covering, i.e., for any $i \in [1:n]$,
\begin{align}
    \sum\limits_{S^{\mathrm{c}} \in \bar{\mathcal{F}}: i \in S^{\mathrm{c}}} \alpha^*(S) & = \sum\limits_{S \in \mathcal{F}: i \in S} \frac{1}{|F| - r(\mathcal{F})} \\
    & = \frac{|\{S^{\mathrm{c}} \in \bar{\mathcal{F}}: i \in S^{\mathrm{c}}\}|}{|F| - r(\mathcal{F})}\\
    & = \frac{|\mathcal{F}| - |\{S \in \mathcal{F}: i \in S\}|}{|F| - r(\mathcal{F})} \\
    & \geq 1,
\end{align}
and achieves the bound in \eqref{eq: EG_16}.

Finally, substituting $\max\limits_{\alpha' \in \mathcal{C}}\min\limits_{S \in \mathcal{F}} \frac{\alpha'(S^{\mathrm{c}})}{w(\alpha')-1} = \frac{1}{r(\mathcal{F})}$ and rearranging the terms in \eqref{eq: EG_18}, we get
\begin{align}\label{eq: EG_24}
    \sum\limits_{S \in \mathcal{F}} \frac{2|E_S(G)|}{|V(G)|} \log{\frac{m_S}{\ell_S}} +  \sum\limits_{S \in \mathcal{F}} \log{\ell_S} \leq r(\mathcal{F})\log{|V(G)|},
\end{align}
which is equivalent to
\begin{align}\label{eq: EG_27}
    \sum\limits_{S \in \mathcal{F}} \frac{2|E_S(G)|}{|V(G)|} \log{m_S} +  \sum\limits_{S \in \mathcal{F}} \left(1 - \frac{2|E_S(G)|}{|V(G)|} \right)\log{\ell_S} \nonumber \\
    \leq r(\mathcal{F})\log{|V(G)|}
\end{align}
and using $m_{\mathcal{F}}\leq \min_{S\in\mathcal{F}}m_S$ and $\ell_{\mathcal{F}}\leq \min_{S\in\mathcal{F}}\ell_S$ in \eqref{eq: EG_27} gives
\begin{align}\label{eq: EG_28}
    \log{m_{\mathcal{F}}} \sum\limits_{S \in \mathcal{F}} \frac{2|E_S(G)|}{|V(G)|} + \log{\ell_{\mathcal{F}}} \sum\limits_{S \in \mathcal{F}} \left(1 - \frac{2|E_S(G)|}{|V(G)|} \right) \nonumber \\
    \leq r(\mathcal{F})\log{|V(G)|}.
\end{align}
Rearranging the terms in \eqref{eq: EG_28} then gives the following upper bound on $|E(G)|$ as
\begin{align}
    &|E(G)| \nonumber \\
    &= \sum\limits_{S \in \mathcal{F}} |E_S(G)| \leq \frac{|V(G)|\left( r(\mathcal{F})\log{|V(G)|} - |\mathcal{F}| \log{\ell_{\mathcal{F}}} \right)}{2 \log{\frac{m_{\mathcal{F}}}{\ell_{\mathcal{F}}}}}.\label{eq: EG_20}
\end{align}

Note that the bound in \eqref{eq: EG_20} is only valid when $\log{\frac{m_{\mathcal{F}}}{\ell_{\mathcal{F}}}} > 0$ because if $\log\frac{m_{\mathcal{F}}}{\ell_{\mathcal{F}}}$ is negative, division reverses the inequality in \eqref{eq: EG_20} and hence does not yield an upper bound on $|E(G)|$. But irrespective of this condition, the following trivial bound always holds
\begin{align}\label{eq: EG_25}
    |E(G)| \leq \frac{1}{2}|\mathcal{F}||V(G)|
\end{align}
because every vertex $x^n \in V(G)$ can have at most $|\mathcal{F}|$ many neighbours $y^n \in V(G)$ such that $\Delta(x^n,y^n) \in \mathcal{F}$. Thus, putting together \eqref{eq: EG_20} and \eqref{eq: EG_25} gives
\begin{align}
    &\hspace{-0.5em}|E(G)| \nonumber \\
    & \hspace{-0.55em}\leq \frac{|V(G)|}{2}\min \left\{\frac{\left( r(\mathcal{F})\log{|V(G)|} - |\mathcal{F}| \log{\ell_{\mathcal{F}}} \right)}{ \log\frac{m_{\mathcal{F}}}{l_{\mathcal{F}}}}, |\mathcal{F}|\right\}\label{eq: EG_29}
\end{align}

\emph{Proof of Part 3).} Setting the minimum admissible values of $m_S = 2$ and $\ell_S = 1$ for all $S \in \mathcal{F}$ in \eqref{eq: EG_29} yields
\begin{align}
    |E(G)| & \leq  \frac{|V(G)|}{2} \min\left\{r(\mathcal{F}|\log{|V(G)|},|\mathcal{F}|\right\}.
\end{align}

\section{Proof of Corollary~\ref{InfluenceLowerBound}}\label{appendix:g}
Let $V(G) = A$ and $E(G)$ denote the set of edges that connect $x^n,y^n \in V(G)$ such that $\Delta(x^n,y^n) \in \mathcal{F}$. Let the edge boundary of $V(G)$ corresponding to $\mathcal{F}$ be defined as
\begin{align}\label{eq:boundary_defn}
    &\mathcal{B}_{\mathcal{F}}(V(G)) \nonumber \\
    &\quad:= \{(x^n,y^n): x^n \in V(G), y^n \notin V(G), \Delta(x^n, y^n) \in \mathcal{F}\}.
\end{align}
Since every vertex $x^n \in V(G)$ has $|\mathcal{F}|$ neighbours in $\{-1,1\}^n$ such that $\Delta(x^n, y^n) \in \mathcal{F}$ we have that
\begin{align}
    |\mathcal{F}||V(G)| = 2|E(G)| + |\mathcal{B}_{\mathcal{F}}(V(G))|.
\end{align}
Thus, it follows that
\begin{align}
    &|\mathcal{B}_{\mathcal{F}}(V(G))| \nonumber \\
    &= |\mathcal{F}||V(G)| - 2|E(G)| \\
    & \geq |\mathcal{F}||V(G)| \nonumber \\
    & \quad - |V(G)|\min \left\{\frac{\left( r(\mathcal{F})\log{|V(G)|} - |\mathcal{F}| \log{\ell_{\mathcal{F}}} \right)}{ \log\frac{m_{\mathcal{F}}}{l_{\mathcal{F}}}}, |\mathcal{F}|\right\} \label{eq: ILB_1}, \\
    & = |\mathcal{F}||V(G)|\max\left\{ \frac{\left( \log{m_{\mathcal{F}}} - \frac{r(\mathcal{F})}{|\mathcal{F}|}\log{|V(G)|} \right)}{ \log\frac{m_{\mathcal{F}}}{l_{\mathcal{F}}}}, 0\right\} \label{eq: ILB_6},
\end{align}
where \eqref{eq: ILB_1} follows from part 2) of Theorem~\ref{ExtremalGraph}. Consequently, the influence of family $\mathcal{F}$ is lower bounded as follows.
\begin{align}
    &I_{\mathcal{F}}(V(G)) \nonumber \\
    & = \frac{1}{|\mathcal{F}|} \sum\limits_{S \in \mathcal{F}} I_S(V(G)) \\
    & = \frac{1}{|\mathcal{F}|} \sum\limits_{S \in \mathcal{F}} \mathbb{P}(\mathbbm{1}\{X^n \in V(G)\} \neq \mathbbm{1}\{\bar{X}^{(S)} \in V(G)\}) \label{eq: ILB_4}\\
    & = \frac{2}{|\mathcal{F}|} \sum\limits_{S \in \mathcal{F}} \sum\limits_{x^n \in \{-1,1\}^n} \frac{\mathbbm{1}\{x^n \in V(G),\bar{x}^{(S)} \notin V(G)\}}{2^n} \label{eq: ILB_2} \\
    & = \frac{2}{|\mathcal{F}|2^n} |\mathcal{B}_{\mathcal{F}}(V(G))| \label{eq: ILB_5}\\
    & \geq \frac{|V(G)|} {2^{n-1}} \max\left\{ \frac{\left( \log{m_{\mathcal{F}}} - \frac{r(\mathcal{F})}{|\mathcal{F}|}\log{|V(G)|} \right)}{ \log\frac{m_{\mathcal{F}}}{l_{\mathcal{F}}}}, 0\right\} \label{eq: ILB_3} \\
    & = \frac{|\mathcal{A}|} {2^{n-1}} \max\left\{ \frac{\left( \log{m_{\mathcal{F}}} - \frac{r(\mathcal{F})}{|\mathcal{F}|}\log{|\mathcal{A}|} \right)}{ \log\frac{m_{\mathcal{F}}}{l_{\mathcal{F}}}}, 0\right\} \\
    & = \frac{|\mathcal{A}|} {2^{n-1}\log\frac{m_{\mathcal{F}}}{l_{\mathcal{F}}}} \max\left\{ \log{m_{\mathcal{F}}} - \frac{r(\mathcal{F})}{|\mathcal{F}|}\log{|\mathcal{A}|}, 0\right\}
\end{align}
where \eqref{eq: ILB_2} follows because $X^n$ is uniformly distributed over $\{-1,1\}^n$ and each pair $x^n \in V(G),\bar{x}^{(S)} \notin V(G)$ satisfies $\mathbbm{1}\{X^n \in V(G)\} \neq \mathbbm{1}\{\bar{X}^{(S)} \in V(G)\}$ twice in \eqref{eq: ILB_4}, \eqref{eq: ILB_5} follows from the definition of $\mathcal{B}_{\mathcal{F}}(V(G))$ in \eqref{eq:boundary_defn} and \eqref{eq: ILB_3} follows from \eqref{eq: ILB_6}.



\fi
\end{document}